\newcommand{\pair}[2]{\mbox{$\langle #1,\; #2 \rangle$}}
\newcommand{\tuple}[1]{\mbox{$\langle #1 \rangle$}}
\newcommand{\set}[1]{\mbox{$\{ #1 \}$}}
\newcommand{\alt}{\mathrel{|}}
\newcommand{\deq}{\triangleq}
\newcommand{\nats}{\mathbb{N}}
\newcommand{\sP}{\mathsf{P}}
\newcommand{\cH}{\mathcal{H}}
\newcommand{\cP}{\mathcal{P}}
\newcommand{\cQ}{\mathcal{Q}}
\newcommand{\cU}{\mathcal{U}}
\newcommand{\cV}{\mathcal{V}}
\newcommand{\cX}{\mathcal{X}}
\newcommand{\cY}{\mathcal{Y}}
\newcommand{\cZ}{\mathcal{Z}}
\newcommand{\mfrL}{\mathfrak{L}}
\newcommand{\mfrS}{\mathfrak{S}}
\newcommand{\bbP}{\mathbb{P}}
\newcommand{\defn}[1]{\textbf{#1}}
\DeclarePairedDelimiter\abs{\lvert}{\rvert}
\DeclarePairedDelimiter{\floor}{\lfloor}{\rfloor}
\title{On partial order semantics for SAT/SMT-based symbolic encodings of weak memory concurrency\thanks{This work is funded by a gift from Intel Corporation for research on Effective Validation of Firmware and the ERC project ERC 280053.}}
\author{Alex Horn \and Daniel Kroening}
\institute{University of Oxford}
\begin{document}

\maketitle
\begin{abstract}
Concurrent systems are notoriously difficult to analyze, and technological advances such as weak memory architectures greatly compound this problem.
This has renewed interest in partial order semantics as a theoretical foundation for formal verification techniques.
Among these, symbolic techniques have been shown to be particularly effective at finding concurrency-related bugs because they can leverage highly optimized decision procedures such as SAT/SMT solvers.
This paper gives new fundamental results on partial order semantics for SAT/SMT-based symbolic encodings of weak memory concurrency.
In particular, we give the theoretical basis for a decision procedure that can handle a fragment of concurrent programs endowed with least fixed point operators.
In addition, we show that a certain partial order semantics of relaxed sequential consistency is equivalent to the conjunction of three extensively studied weak memory axioms by Alglave et al.
An important consequence of this equivalence is an asymptotically smaller symbolic encoding for bounded model checking which has only a quadratic number of partial order constraints compared to the state-of-the-art cubic-size encoding.
\end{abstract}

\section{Introduction}
\vspace{-0.3em}
\label{section:introduction}

Concurrent systems are notoriously difficult to analyze, and technological advances such as weak memory architectures as well as highly available distributed services greatly compound this problem. This has renewed interest in partial order concurrency semantics as a theoretical foundation for formal verification techniques. Among these, \emph{symbolic techniques} have been shown to be particularly effective at finding concurrency-related bugs because they can leverage highly optimized decision procedures such as SAT/SMT solvers. This paper studies partial order semantics from the perspective of SAT/SMT-based symbolic encodings of weak memory concurrency.

Given the diverse range of partial order concurrency semantics, we link our study to a recently developed unifying theory of concurrency by Tony Hoare et al.~\cite{HMSW2011}. This theory is known as \emph{Concurrent Kleene Algebra} (CKA) which is an algebraic concurrency semantics based on quantales, a special case of the fundamental algebraic structure of idempotent semirings. Based on quantales, CKA combines the familiar laws of the sequential program operator ($;$) with a new operator for concurrent program composition ($\parallel$). A distinguishing feature of CKA is its exchange law $(\cU \parallel \cV) ; (\cX \parallel \cY) \subseteq (\cU ; \cX) \parallel (\cV ; \cY)$ that describes how sequential and concurrent composition operators can be interchanged. Intuitively, since the binary relation $\subseteq$ denotes program refinement, the exchange law expresses a divide-and-conquer mechanism for how concurrency may be sequentially implemented on a machine. The exchange law, together with a uniform treatment of programs and their specifications, is key to unifying existing theories of concurrency~\cite{HvS2014}. CKA provides such a unifying theory~\cite{HvS2012,HvS2014} that has practical relevance on proving program correctness, e.g. using rely/guarantee reasoning~\cite{HMSW2011}. Conversely, however, pure algebra cannot refute that a program is correct or that certain properties about every program always hold~\cite{HvS2012,HvS2014,HvSMSVZOH2014}. This is problematic for theoretical reasons but also in practice because todays software complexity requires a diverse set of program analysis tools that range from proof assistants to automated testing. The solution is to accompany CKA with a mathematical model which satisfies its laws so that we can \emph{prove} as well as \emph{disprove} properties about programs.

One such well-known model-theoretical foundation for CKA is Pratt's~\cite{P1986} and Gischer's~\cite{G1988} partial order model of computation that is constructed from \emph{labelled partially ordered multisets} (pomsets). Pomsets generalize the concept of a string in finite automata theory by relaxing the total ordering of the occurrence of letters within a string to a partial order. For example, $a \parallel a$ denotes a pomset that consists of two unordered events that are both labelled with the letter $a$. By partially ordering events, pomsets form an integral part of the extensive theoretical literature on so-called `true concurrency', e.g.~\cite{P1966,L1978,G1981,NPW1981,P1986,G1988}, in which pomsets strictly generalize Mazurkiewicz traces~\cite{BK1992}, and prime event structures~\cite{NPW1981} are pomsets enriched with a conflict relation subject to certain conditions. From an algorithmic point of view, the complexity of the \emph{pomset language membership} (PLM) problem is NP-complete, whereas the pomset language containment (PLC) problem is $\Pi_2^p$-complete~\cite{FKL1993}.

Importantly, these aforementioned theoretical results only apply to star-free pomset languages (without fixed point operators). In fact, the decidability of the equational theory of the pomset language closed under least fixed point, sequential and concurrent composition operators (but without the exchange law) has been only most recently established~\cite{LS2014}; its complexity remains an open problem~\cite{LS2014}. Yet another open problem is the decidability of this equational theory together with the exchange law~\cite{LS2014}. In addition, it is still unclear how theoretical results about pomsets may be applicable to formal techniques for finding concurrency-related bugs. In fact, it is not even clear how insights about pomsets may be combined with most recently studied language-specific or hardware-specific concurrency semantics, e.g.~\cite{SSONM2010,SVZJS2011,BOSSW2011,AMSS2012}.

These gaps are motivation to reinvestigate pomsets from an algorithmic perspective. In particular, our work connects pomsets to a SAT/SMT-based bounded model checking technique~\cite{AKT2013} where shared memory concurrency is symbolically encoded as partial orders. To make this connection, we adopt pomsets as \emph{partial strings} (Definition~\ref{def:partial-string}) that are ordered by a refinement relation (Definition~\ref{def:partial-string-isomorphism}) based on \'{E}sik's notion of \emph{monotonic bijective morphisms}~\cite{E2002}. Our partial-string model then follows from the standard Hoare powerdomain construction where sets of partial strings are downward-closed with respect to monotonic bijective morphism (Definition \ref{def:program}). The relevance of this formalization for the modelling of weak memory concurrency (including data races) is explained through several examples. Our main contributions are as follows:
\begin{enumerate}
\item We give the theoretical basis for a decision procedure that can handle a fragment of \emph{concurrent programs endowed with least fixed point operators} (Theorem~\ref{theorem:program-reduction}). This is accomplished by exploiting a form of periodicity, thereby giving a mechanism for reducing a countably infinite number of events to a finite number. This result particularly caters to partial order encoding techniques that can currently only encode a finite number of events due to the deliberate restriction to quantifier-free first-order logic, e.g.~\cite{AKT2013}.
\item We then interpret a particular form of weak memory in terms of certain downward-closed sets of partial strings (Definition~\ref{def:SC-relaxed-program}), and show that our interpretation is equivalent to the conjunction of three fundamental weak memory axioms (Theorem~\ref{theorem:SC-relaxed-equivalence}), namely `write coherence', `from-read' and `global read-from'~\cite{AMSS2012}. Since all three axioms underpin extensive experimental research into weak memory architectures~\cite{AMSS2011}, \emph{Theorem~\ref{theorem:SC-relaxed-equivalence} gives denotational partial order semantics a new practical dimension}.
\item Finally, we prove that there exists an \emph{asymptotically smaller quantifier-free first-order logic formula} that has only $O(N^2)$ partial order constraints (Theorem~\ref{theorem:smaller-fr}) compared to the state-of-the-art $O(N^3)$ partial order encoding for bounded model checking~\cite{AKT2013} where $N$ is the maximal number of reads and writes on the same shared memory address. This is significant because $N$ can be prohibitively large when concurrent programs frequently share data.
\end{enumerate}

The rest of this paper is organized into three parts. First, we recall familiar concepts on partial-string theory (\autoref{section:partial-string-theory}) on which the rest of this paper is based. We then prove a least fixed point reduction result (\autoref{label:least-fixed-point-reduction}). Finally, we characterize a particular form of relaxed sequential consistency in terms of three weak memory axioms by Alglave et al. (\autoref{section:SC-relaxed}).

\section{Partial-string theory}
\vspace{-0.3em}
\label{section:partial-string-theory}

In this section, we adapt an axiomatic model of computation that uses partial orders to describe the semantics of concurrent systems. For this, we recall familiar concepts (Definition~\ref{def:partial-string},~\ref{def:partial-string-composition},~\ref{def:partial-string-isomorphism}~and~\ref{def:program}) that underpin our mathematical model of CKA (Theorem~\ref{theorem:program-algebra}). This model is the basis for subsequent results in~\autoref{label:least-fixed-point-reduction}~and~\autoref{section:SC-relaxed}.

\begin{definition}[Partial string]
\label{def:partial-string}
Denote with $E$ a nonempty set of \defn{events}. Let $\Gamma$ be an \defn{alphabet}. A \defn{partial string} $p$ is a triple $\tuple{E_p, \alpha_p, \preceq_p}$ where $E_p$ is a subset of $E$, $\alpha_p \colon E_p \to \Gamma$ is a function that maps each event in $E_p$ to an alphabet symbol in $\Gamma$, and $\preceq_p$ is a partial order on $E_p$. Two partial strings $p$ and $q$ are said to be \defn{disjoint} whenever $E_p \cap E_q = \emptyset$. A partial string $p$ is called \defn{empty} whenever $E_p = \emptyset$. Denote with $\sP_f$ the set of all \defn{finite partial strings} $p$ whose event set $E_p$ is finite.
\end{definition}

Each event in the universe $E$ should be thought of as an occurrence of a computational step, whereas letters in $\Gamma$ describe the computational effect of events. Typically, we denote a partial string by $p$, or letters from $x$ through $z$. In essence, a partial string $p$ is a partially-ordered set $\pair{E_p}{\preceq_p}$ equipped with a labelling function $\alpha_p$. A partial string is therefore the same as a \emph{labelled partial order} (lpo), see also Remark~\ref{remark:partial-string}. We draw finite partial strings in $\sP_f$ as inverted Hasse diagrams (e.g. Fig.~\ref{fig:partial-string-example}), where the ordering between events may be interpreted as a happens-before relation~\cite{L1978}, a fundamental notion in distributed systems and formal verification of concurrent systems, e.g.~\cite{BOSSW2011,AMSS2012}. We remark the obvious fact that the empty partial string is unique under component-wise equality.

\begin{example}
In the partial string in Fig.~\ref{fig:partial-string-example}, $e_0$ happens-before $e_1$, whereas both $e_0$ and $e_2$ happen concurrently because neither $e_0 \preceq_p e_2$ nor $e_2 \preceq_p e_0$.
\end{example}

\begin{SCfigure}[100][t]
\xymatrix@C=1.2em@R=0.9em{
  e_0            & e_2            \\
  e_1\ar@{<-}[u] & e_3\ar@{<-}[u]
}
\caption{
A partial string $p = \tuple{E_p, \alpha_p, \preceq_p}$ with events $E_p = \set{e_0, e_1, e_2, e_3}$ and the labelling function $\alpha_p$ satisfying the following: $\alpha_p(e_0) = `r_0\, \texttt{:=}\, [b]_\mathsf{acquire}\textrm'$, $\alpha_p(e_1) = `r_1\, \texttt{:=}\, [a]_\mathsf{none}\textrm'$, $\alpha_p(e_2) = `[a]_\mathsf{none}\,\texttt{:=}\,1\textrm'$ and $\alpha_p(e_3) = `[b]_\mathsf{release}\,\texttt{:=}\,1\textrm'$.
}
\label{fig:partial-string-example}
\end{SCfigure}

We abstractly describe the control flow in concurrent systems by adopting the sequential and concurrent operators on labelled partial orders~\cite{G1981,P1986,G1988,E2002,HA2014}.

\begin{definition}[Partial string operators]
\label{def:partial-string-composition}
Let $x$ and $y$ be disjoint partial strings. Let $x \parallel y \deq \tuple{E_{x \parallel y}, \alpha_{x \parallel y}, \preceq_{x \parallel y}}$ and $x ; y \deq \tuple{E_{x ; y}, \alpha_{x ; y}, \preceq_{x ; y}}$ be their \defn{concurrent} and \defn{sequential composition}, respectively, where $E_{x \parallel y} = E_{x ; y} \deq E_x \cup E_y$ such that, for all events $e, e'$ in $E_x \cup E_y$, the following holds:
\begin{itemize}
\item $e \preceq_{x \parallel y} e' \text{ exactly  if } e \preceq_x e' \text{ or } e \preceq_y e'$,
\item $e \preceq_{x ; y} e' \text{ exactly if } (e \in E_x \text{ and } e' \in E_y) \text{ or } e \preceq_{x \parallel y} e'$,
\item $\alpha_{x \parallel y}(e) = \alpha_{x ; y}(e) \deq
\begin{cases}
  \alpha_x(e) &\text{if } e \in E_x \\
  \alpha_y(e) &\text{if } e \in E_y.
\end{cases}$
\end{itemize}
\end{definition}

For simplicity, we assume that partial strings can be always made disjoint by renaming events if necessary. But this assumption could be avoided by using coproducts, a form of constructive disjoint union~\cite{HA2014}. When clear from the context, we construct partial strings directly from the labels in $\Gamma$.

\begin{example}
If we ignore labels for now and let $p_i$ for all $0 \leq i \leq 3$ be four partial strings which each consist of a single event $e_i$, then $(p_0 ; p_1) \parallel (p_2 ; p_3)$ corresponds to a partial string that is isomorphic to the one shown in Fig.~\ref{fig:partial-string-example}.
\end{example}

To formalize the set of all possible happens-before relations of a concurrent system, we rely on \'{E}sik's notion of monotonic bijective morphism~\cite{E2002}:

\begin{definition}[Partial string refinement]
\label{def:partial-string-isomorphism}
Let $x$ and $y$ be partial strings such that $x = \tuple{E_x, \alpha_x \preceq_x}$ and $y = \tuple{E_y, \alpha_y, \preceq_y}$. A \defn{monotonic bijective morphism} from $x$ to $y$, written $f \colon x \to y$, is a bijective function $f$ from $E_x$ to $E_y$ such that, for all events $e, e' \in E_x$, $\alpha_x(e) = \alpha_y(f(e))$, and if $e \preceq_x e'$, then $f(e) \preceq_y f(e')$. Then $x$ \defn{refines} $y$, written $x \sqsubseteq y$, if there exists a monotonic bijective morphism $f \colon y \to x$ from $y$ to $x$.
\end{definition}

\begin{remark}
\label{remark:partial-string}
Partial words~\cite{G1981} and pomsets~\cite{P1986,G1988} are defined in terms of isomorphism classes of lpos. Unlike lpos in pomsets, however, we study partial strings in terms of monotonic bijective morphisms~\cite{E2002} because isomorphisms are about sameness whereas the exchange law on partial strings is an inequation~\cite{HA2014}.
\end{remark}

\begin{SCfigure}[100][t]
\xy
  \xymatrix "M"@C=1.2em@R=1em{
    e_0\ar@{.>}@/^1pc/[rrr]             & e_2\ar@{.>}@/^1pc/[rrr]             &              & e'_0             & e'_2           \\
    e_1\ar@{<-}[u]\ar@{.>}@/^-1pc/[rrr] & e_3\ar@{<-}[u]\ar@{.>}@/^-1pc/[rrr] &   \ar@{~}[u] & e'_1\ar@{<-}[u] & e'_3\ar@{<-}[u]\ar@{<-}[lu] \\
&&&&\\
  }
\POS"M3,1"."M3,2"!C*\frm{_\}},+D*++!U\txt{$y$}
    ,"M3,4"."M3,5"!C*\frm{_\}},+D*++!U\txt{$x$}
\endxy
\caption{Two partial strings $x$ and $y$ such that $x \sqsubseteq y$ provided all the labels are preserved, e.g. $\alpha_x(e'_0) = \alpha_y(e_0)$.}
\label{fig:sqsubseteq-partial-string}
\end{SCfigure}
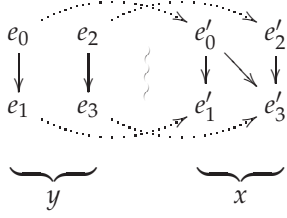

The purpose of Definition~\ref{def:partial-string-isomorphism} is to disregard the identity of events but retain the notion of `subsumption', cf.~\cite{G1988}. The intuition is that $\sqsubseteq$ orders partial strings according to their determinism. In other words, $x \sqsubseteq y$ for partial strings $x$ and $y$ implies that all events ordered in $y$ have the same order in $x$.

\begin{example}
\label{example:sqsubseteq-partial-string}
Fig.~\ref{fig:sqsubseteq-partial-string} shows a monotonic bijective morphism from a partial string as given in Fig.~\ref{fig:partial-string-example} to an $N$-shaped partial string that is almost identical to the one in Fig.~\ref{fig:partial-string-example} except that it has an additional partial order constraint, giving its $N$ shape. One well-known fact about $N$-shaped partial strings is that they cannot be constructed as $x ; y$ or $x \parallel y$ under any labelling~\cite{P1986}. However, this is not a problem for our study, as will become clear after Definition~\ref{def:program}.
\end{example}

Our notion of partial string refinement is particularly appealing for symbolic techniques of concurrency because the monotonic bijective morphism can be directly encoded as a first-order logic formula modulo the theory of uninterpreted functions. Such a symbolic partial order encoding would be fully justified from a computational complexity perspective, as shown next.

\begin{proposition}
\label{proposition:psr-NP-complete}
Let $x$ and $y$ be finite partial strings in $\sP_f$. The \defn{partial string refinement} (PSR) problem --- i.e. whether $x \sqsubseteq y$ --- is NP-complete.
\end{proposition}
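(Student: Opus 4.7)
The plan is to establish PSR $\in$ NP by a guess-and-check argument, then prove NP-hardness by reducing from CLIQUE.

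For membership, a witness is a candidate monotonic bijective morphism $f \colon E_y \to E_x$; its description has size polynomial in $|E_y|+|E_x|$. A verifier confirms in polynomial time that (i) $f$ is a bijection between the two finite event sets, (ii) $\alpha_x(f(e)) = \alpha_y(e)$ for each $e \in E_y$, and (iii) $e \preceq_y e'$ implies $f(e) \preceq_x f(e')$ for every ordered pair. Each check runs in $O(|E_y|^2)$ time, so PSR lies in NP.

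For the lower bound, I would reduce from CLIQUE using two alphabet symbols $\mathsf{h}$ (heads) and $\mathsf{t}$ (tails). Given a graph $G=(V,E)$ with $|V|=n$, $|E|=m$ and target $k$, define the host $x$ to have head events $\{H_v : v \in V\}$ labeled $\mathsf{h}$ and tail events $\{T_e : e\in E\}$ labeled $\mathsf{t}$, with $H_u, H_v \preceq_x T_{\{u,v\}}$ for each edge $\{u,v\}\in E$. Define the pattern $y$ to have $k$ active heads $h_1,\ldots,h_k$ and $\binom{k}{2}$ active tails $t_{ij}$ ($i<j$) with $h_i, h_j \preceq_y t_{ij}$, then pad $y$ with $n-k$ isolated heads and $m-\binom{k}{2}$ isolated tails so that $|E_y|=|E_x|=n+m$. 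Any label-preserving monotonic bijection $f \colon E_y \to E_x$ sends the $k$ active heads to some $k$ distinct vertex events of $x$; the constraints on active tails then force each pair of these vertices to lie below a common tail event, hence to span an edge of $G$, so the chosen vertices form a $k$-clique. Conversely, a $k$-clique in $G$ induces such an $f$ by matching active heads to clique vertices, active tails to the corresponding clique edges, and the isolated padding events bijectively and arbitrarily to the leftover heads and tails of $x$.

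The principal obstacle is the bijection requirement: unlike typical subgraph embedding problems, PSR demands $|E_x|=|E_y|$, so the reduction must equalize the two sides without introducing spurious order constraints. Isolated padding events handle this cleanly because they are comparable to nothing in $y$, so condition (iii) imposes no demand on their images and they cannot interfere with the clique encoding. The remaining bookkeeping — that the construction is polynomial-time and that both directions of the correspondence go through — is then routine.
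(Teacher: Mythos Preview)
Your argument is correct: the CLIQUE reduction goes through as stated, because in $x$ each tail $T_e$ has exactly the two heads $H_u,H_v$ below it, so any label-preserving monotonic bijection forces the images of the $k$ active heads to be pairwise adjacent, and the padding by isolated events is harmless since monotonicity imposes nothing on their images. The paper takes a different and much shorter route: it observes that the special case of PSR where the left-hand side $x$ is a \emph{totally} ordered partial string is precisely the pomset language membership problem of Feigenbaum, Kahn and Lund, which is already known to be NP-hard; PSR then inherits hardness immediately. The trade-off is clear. The paper's proof is a two-line citation but depends on the reader accepting an external result whose own reduction is not reproduced; your proof is self-contained and uses only a textbook source problem, at the cost of having to engineer the equal-cardinality padding and verify both directions of the correspondence. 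Either approach is perfectly adequate here; yours has the added virtue of making explicit that hardness does not rely on either side being a chain.
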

\begin{proof}
Clearly \textsc{PSR} is in NP. 
The NP-hardness proof proceeds by reduction from the PLM problem~\cite{FKL1993}. Let $\Gamma^\ast$ be the set of strings, i.e. the set of finite partial strings $s$ such that $\preceq_s$ is a total order (for all $e, e' \in E_s$, $e \preceq_s e'$ or $e' \preceq_s e$). Given a finite partial string $p$, let $\mfrL_p$ be the set of all strings which refine $p$; equivalently, $\mfrL_p \deq \set{s \in \Gamma^\ast \alt s \sqsubseteq p}$. So $\mfrL_p$ denotes the same as $L(p)$ in~\cite[Definition 2.2]{FKL1993}.

Let $s$ be a string in $\Gamma^\ast$ and $P$ be a pomset over the alphabet $\Gamma$. By Remark~\ref{remark:partial-string}, fix $p$ to be a partial string in $P$. Thus $s$ refines $p$ if and only if $s$ is a member of $\mfrL_p$. Since this membership problem is NP-hard~\cite[Theorem 4.1]{FKL1993}, it follows that the \textsc{PSR} problem is NP-hard. So the \textsc{PSR} problem is NP-complete. \qed
\end{proof}

Note that a single partial string is not enough to model mutually exclusive (nondeterministic) control flow. To see this, consider a simple (possibly sequential) system such as \texttt{if * then P else Q} where \texttt{*} denotes nondeterministic choice. If the semantics of a program was a single partial string, then we need to find exactly one partial string that represents the fact that \texttt{P} executes or \texttt{Q} executes, but never both. To model this, rather than using a conflict relation~\cite{NPW1981}, we resort to the simpler Hoare powerdomain construction where we lift sequential and concurrent composition operators to \emph{sets} of partial strings. But since we are aiming (similar to Gischer~\cite{G1988}) at an \emph{over-approximation of concurrent systems}, these sets are downward closed with respect to our partial string refinement ordering from Definition~\ref{def:partial-string-isomorphism}. Additional benefits of using the downward closure include that program refinement then coincides with familiar set inclusion and the ease with which later the Kleene star operators can be defined.

\begin{definition}[Program]
\label{def:program}
A \defn{program} is a downward-closed set of finite partial strings with respect to $\sqsubseteq$; equivalently $\cX \subseteq \sP_f$ is a program whenever $\downarrow_\sqsubseteq \cX = \cX$ where $\downarrow_\sqsubseteq \cX \deq \set{y \in \sP_f \alt \exists x \in \cX \colon y \sqsubseteq x}$. Denote with $\bbP$ the family of all programs.
\end{definition}

Since we only consider systems that terminate, each partial string $x$ in a program $\cX$ is finite. We reemphasize that the downward closure of such a set $\cX$ can be thought of as an over-approximation of all possible happens-before relations in a concurrent system whose instructions are ordered according to the partial strings in $\cX$. Later on (\autoref{section:SC-relaxed}) we make the downward closure of partial strings more precise to model a certain kind of relaxed sequential consistency.

\begin{example}
\label{example:program}
Recall that $N$-shaped partial strings cannot be constructed as $x ; y$ or $x \parallel y$ under any labelling~\cite{P1986}. Yet, by downward-closure of programs, such partial strings are included in the over-approximation of all the happens-before relations exhibited by a concurrent system. In particular, according to Example~\ref{example:sqsubseteq-partial-string}, the downward-closure of the set containing the partial string in Fig.~\ref{fig:partial-string-example} includes (among many others) the $N$-shaped partial string shown on the right in Fig.~\ref{fig:sqsubseteq-partial-string}. In fact, we shall see in~\autoref{section:SC-relaxed} that this particular $N$-shaped partial string corresponds to a data race in the concurrent system shown in Fig.~\ref{fig:intro-example-racy}.
\end{example}

It is standard~\cite{G1988,HA2014} to define $0 \deq \emptyset$ and $1 \deq \set{\bot}$ where $\bot$ is the (unique) empty partial string. Clearly $0$ and $1$ form programs in the sense of Definition~\ref{def:program}. For the next theorem, we lift the two partial string operators (Definition~\ref{def:partial-string-composition}) to programs in the standard way:

\begin{definition}[Bow tie]
\label{def:program-operator}
Given two partial strings $x$ and $y$, denote with $x \Join y$ either concurrent or sequential composition of $x$ and $y$. For all programs $\cX, \cY$ in $\bbP$ and partial string operators $\Join$, $\cX \Join \cY \deq\ \downarrow_\sqsubseteq \set{x \Join y \alt x \in \cX \text{ and } y \in \cY}$ where $\cX \parallel \cY$ and $\cX ; \cY$ are called \defn{concurrent} and \defn{sequential program composition}, respectively.
\end{definition}

By denoting programs as sets of partial strings, we can now define Kleene star operators $(-)^\parallel$ and $(-)^;$ for iterative concurrent and sequential program composition, respectively, as least fixed points ($\mu$) using set union ($\cup$) as the binary join operator that we interpret as the nondeterministic choice of two programs. We remark that this is fundamentally different from the pomsets recursion operators in ultra-metric spaces~\cite{BW1990}. The next theorem could be then summarized as saying that the resulting structure of programs, written $\mfrS$, is a partial order model of an algebraic concurrency semantics that satisfies the CKA laws~\cite{HMSW2011}. Since CKA is an exemplar of the universal laws of programming~\cite{HvS2014}, we base the rest of this paper on our partial order model of CKA.

\begin{theorem}
\label{theorem:program-algebra}
The structure $\mfrS = \tuple{\bbP, \subseteq, \cup, 0, 1, ;, \parallel}$ is a complete lattice, ordered by subset inclusion (i.e. $\cX \subseteq \cY$ exactly if $\cX \cup \cY = \cY$), such that $\parallel$ and $;$ form unital quantales over $\cup$ where $\mfrS$ satisfies the following:
\begin{align*}
&(\cU \parallel \cV) ; (\cX \parallel \cY) \subseteq (\cU ; \cX) \parallel (\cV ; \cY) &\quad  \cX \cup (\cY \cup \cZ) = (\cX \cup \cY) \cup \cZ  \\
&\cX \cup \cX = \cX &\quad  \cX \cup 0 = 0 \cup \cX = \cX                                                                             \\
&\cX \cup \cY = \cY \cup \cX &\quad \cX \parallel \cY = \cY \parallel \cX                                                          \\
&\cX \parallel 1 = 1 \parallel \cX = \cX &\quad \cX ; 1 = 1 ; \cX = \cX                                                      \\
&\cX \parallel 0 = 0 \parallel \cX = 0   &\quad \cX ; 0 = 0 ; \cX = 0                                                        \\
&\cX \parallel (\cY \cup \cZ) = (\cX \parallel \cY) \cup (\cX \parallel \cZ) &\quad  \cX ; (\cY \cup \cZ) = (\cX ; \cY) \cup (\cX ; \cZ) \\
&(\cX \cup \cY) \parallel \cZ = (\cX \parallel \cZ) \cup (\cY \parallel \cZ) &\quad (\cX \cup \cY) ; \cZ = (\cX ; \cZ) \cup (\cY ; \cZ)  \\
&\cX \parallel (\cY \parallel \cZ) = (\cX \parallel \cY) \parallel \cZ &\quad \cX ; (\cY ; \cZ) = (\cX ; \cY) ; \cZ          \\
&\cP^{\parallel} = \mu \cX .1 \cup (\cP \parallel \cX) &\quad \cP^{;} = \mu \cX .1 \cup (\cP ; \cX).                                   
\end{align*}
\end{theorem}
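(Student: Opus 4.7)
The plan is to reduce every clause to a statement about individual finite partial strings and then lift it through the downward-closure machinery. I would proceed in four phases.

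First, I would settle the lattice part. Any union or intersection of downward-closed subsets of $\sP_f$ is again downward closed, so $\tuple{\bbP, \subseteq}$ is a complete lattice with joins given by union. This single observation yields associativity, commutativity and idempotence of $\cup$ as well as $\cX \cup 0 = \cX$ (since $0 = \emptyset$).

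Second, I would prove four facts at the level of individual partial strings: (i) $\parallel$ and $;$ are associative on the common event set and $\parallel$ is commutative, each by a direct unfolding of the defining order from Definition~\ref{def:partial-string-composition}; (ii) $\bot$ is a two-sided unit for both operations, which is immediate; (iii) a monotonicity lemma stating that if $x' \sqsubseteq x$ and $y' \sqsubseteq y$ then $x' \Join y' \sqsubseteq x \Join y$ for $\Join \in \set{\parallel, ;}$, proved by gluing together the two witnessing monotonic bijective morphisms on the disjoint event sets; and (iv) the exchange inequation $(u \parallel v); (x \parallel y) \sqsubseteq (u ; x) \parallel (v ; y)$. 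The exchange step is the only delicate point: the identity bijection on the common event set $E_u \cup E_v \cup E_x \cup E_y$ is a monotonic bijective morphism from the right-hand partial string to the left-hand one, because the sequential composition on the left forces every event of $E_u \cup E_v$ before every event of $E_x \cup E_y$, which strictly includes the two orderings ($E_u$ before $E_x$ and $E_v$ before $E_y$) imposed by the right-hand side.

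Third, I would lift to programs. Closure of $\bbP$ under $\parallel$ and $;$ is immediate from the definition of $\Join$ as a downward closure. The monotonicity lemma together with the set-theoretic identities $\downarrow_\sqsubseteq \bigcup_i A_i = \bigcup_i \downarrow_\sqsubseteq A_i$ and $\downarrow_\sqsubseteq \downarrow_\sqsubseteq A = \downarrow_\sqsubseteq A$ lets me commute the downward closure past unions and past either composition operator; this yields, in one stroke, the two quantale conditions, associativity of both compositions, commutativity of $\parallel$, the unit laws, and the exchange law, each by expanding both sides as downward closures and invoking the corresponding partial-string fact pointwise. The annihilator laws $\cX \Join 0 = 0$ hold because the set $\set{x \Join y \alt x \in \cX,\ y \in \emptyset}$ is empty.

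Finally, $\cX \mapsto 1 \cup (\cP \parallel \cX)$ and $\cX \mapsto 1 \cup (\cP ; \cX)$ are monotone endomaps on the complete lattice $\tuple{\bbP, \subseteq}$ by the distributivity just established, so their least fixed points exist by Knaster--Tarski and define $\cP^\parallel$ and $\cP^;$ as claimed. The main obstacle is the exchange inequation at the partial-string level, which is the sole strictly inequational axiom and requires the careful identity-bijection argument above; a secondary subtlety is controlling the nested downward closures that appear when lifting associativity and distributivity, which is precisely where the monotonicity lemma of step two earns its keep.
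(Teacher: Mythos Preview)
Your plan is sound and is the standard route to establishing that downward-closed sets of finite labelled partial orders form a CKA; the paper itself does not spell out a proof but defers entirely to the accompanying technical report, which by all indications proceeds along the same lines you sketch (pointwise laws on partial strings, monotonicity of $\Join$ with respect to $\sqsubseteq$, then pushing the downward closure through unions and compositions). Your identity-bijection argument for the exchange inequation and your invocation of Knaster--Tarski for the two least fixed points are exactly the expected moves.
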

\begin{proof}
The details are in the accompanying technical report of this paper~\cite{HA2014}.
\end{proof}

By Theorem~\ref{theorem:program-algebra}, it makes sense to call $1$ in structure $\mfrS$ the \defn{$\Join$-identity program} where $\Join$ is a placeholder for either $;$ or $\parallel$. In the sequel, we call the binary relation $\subseteq$ on $\bbP$ the \defn{program refinement relation}.

\section{Least fixed point reduction}
\vspace{-0.3em}
\label{label:least-fixed-point-reduction}

This section is about the least fixed point operators $(-)^;$ and $(-)^\parallel$. Henceforth, we shall denote these by $(-)^{\Join}$. We show that under a certain finiteness condition (Definition~\ref{def:program-elementary}) the program refinement problem $\cX^{\Join} \subseteq \cY^{\Join}$ can be reduced to a bounded number of program refinement problems without least fixed points (Theorem~\ref{theorem:program-reduction}). To prove this, we start by inductively defining the notion of iteratively composing a program with itself under $\Join$.

\begin{definition}[$n$-iterated-$\Join$-program-composition]
\label{def:program-n-iterated-composition}
Let $\nats_0 \deq \nats \cup \set{0}$ be the set of \defn{non-negative integers}. For all programs $\cP$ in $\bbP$ and non-negative integers $n$ in $\nats_0$, $\cP^{0 \cdot \Join} \deq 1 = \set{\bot}$ is the $\Join$-identity program and $\cP^{(n+1) \cdot \Join} \deq \cP \Join \cP^{n \cdot \Join}$.
\end{definition}

Clearly $(-)^{\Join}$ is the limit of its approximations in the following sense:

\begin{proposition}
\label{proposition:program-least-fixed-point-as-n-iterated-composition}
For every program $\cP$ in $\bbP$, $\cP^{\Join} = \bigcup_{n \ge 0} \cP^{n \cdot \Join}$.
\end{proposition}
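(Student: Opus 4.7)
The plan is to recognize that $\cP^{\Join}$ is the least fixed point of the monotone map $F \colon \bbP \to \bbP$ defined by $F(\cX) \deq 1 \cup (\cP \Join \cX)$ guaranteed by Theorem~\ref{theorem:program-algebra}, and then to identify the Kleene iterates $F^n(0)$ with the partial unions $\bigcup_{k < n} \cP^{k\cdot\Join}$.

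First I would check that $F$ is Scott-continuous on the complete lattice $\tuple{\bbP, \subseteq}$. Monotonicity is immediate because both $\cup$ and $\Join$ are monotone in each argument (from Definition~\ref{def:program-operator}). For continuity on a directed family $\set{\cY_i}_i \subseteq \bbP$ with union $\cY = \bigcup_i \cY_i$, every $z \in \cP \Join \cY$ satisfies $z \sqsubseteq p \Join y$ for some $p \in \cP$ and $y \in \cY$, and any such $y$ already lies in some $\cY_i$; together with downward-closure this yields $\cP \Join \cY = \bigcup_i (\cP \Join \cY_i)$. Union with the constant program $1$ trivially preserves continuity.

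Next I would prove by induction on $n$ that $F^n(0) = \bigcup_{k=0}^{n-1} \cP^{k \cdot \Join}$. The base case $F^0(0) = 0 = \emptyset$ matches the empty union. For the inductive step, applying distributivity of $\Join$ over $\cup$ from Theorem~\ref{theorem:program-algebra} together with the recurrence $\cP \Join \cP^{k \cdot \Join} = \cP^{(k+1)\cdot \Join}$ of Definition~\ref{def:program-n-iterated-composition} gives
\[
F^{n+1}(0) \;=\; 1 \cup \Bigl(\cP \Join \bigcup_{k=0}^{n-1} \cP^{k\cdot\Join}\Bigr) \;=\; \cP^{0\cdot\Join} \cup \bigcup_{k=1}^{n} \cP^{k\cdot\Join} \;=\; \bigcup_{k=0}^{n} \cP^{k\cdot\Join}.
\]
Finally, by the Kleene fixed point theorem applied to Scott-continuous $F$ on the complete lattice $\bbP$,
\[
\cP^{\Join} \;=\; \mu F \;=\; \bigcup_{n \ge 0} F^n(0) \;=\; \bigcup_{n \ge 0}\bigcup_{k < n} \cP^{k\cdot\Join} \;=\; \bigcup_{k \ge 0} \cP^{k\cdot\Join},
\]
as required.

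The main obstacle is the Scott-continuity check in the first step: Theorem~\ref{theorem:program-algebra} only records binary distributivity of $\Join$ over $\cup$, so one has to fall back to the explicit downward-closure formula of Definition~\ref{def:program-operator} to lift distributivity to arbitrary directed unions. Once that is in hand, the inductive computation of the Kleene approximants and the final invocation of Kleene's theorem are routine.
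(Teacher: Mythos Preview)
Your argument is correct. The paper itself gives no proof of this proposition: it simply announces that ``$(-)^{\Join}$ is the limit of its approximations'' and states the identity, treating it as a standard consequence of the quantale structure established in Theorem~\ref{theorem:program-algebra}. Your Kleene-style derivation is exactly the folklore argument that justifies such a claim.

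One remark on what you flag as the main obstacle. You write that Theorem~\ref{theorem:program-algebra} ``only records binary distributivity of $\Join$ over $\cup$'' and therefore fall back on the explicit downward-closure description to obtain distributivity over directed unions. In fact the theorem asserts more: it states that $\parallel$ and $;$ form \emph{unital quantales} over $\cup$, and by definition a quantale multiplication distributes over \emph{arbitrary} suprema, not just binary ones. So Scott-continuity of $F(\cX) = 1 \cup (\cP \Join \cX)$ is immediate from the quantale axiom, and you need not unpack Definition~\ref{def:program-operator} at all. Your direct verification via the downward-closure formula is of course also valid, just more work than necessary.
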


\begin{definition}[Elementary program]
\label{def:program-elementary}
A program $\cP$ in $\bbP$ is called \defn{elementary} if $\cP$ is the downward-closed set with respect to $\sqsubseteq$ of some finite and nonempty set $\cQ$ of finite partial strings, i.e. $\cP = \downarrow_\sqsubseteq \cQ$. The set of elementary programs is denoted by $\bbP_\ell$.
\end{definition}

An elementary program therefore could be seen as a machine-representable program generated from a finite and nonempty set of finite partial strings. This finiteness restriction makes the notion of elementary programs a suitable candidate for the study of decision procedures. To make this precise, we define the following unary partial string operator:

\begin{definition}[$n$-repeated-$\Join$ partial string operator]
\label{def:partial-string-dot-coproduct}
For every non-negative integer $n$ in $\nats_0$, $x^{0 \cdot \Join} \deq \bot$ is the empty partial string and $x^{(n+1) \cdot \Join} \deq x \Join x^{n \cdot \Join}$.
\end{definition}

Intuitively, $p^{n \cdot \Join}$ is a partial string that consists of $n$ copies of a partial string $p$, each combined by the partial string operator $\Join$. This is formalized as follows:

\begin{proposition}
Let $n \in \nats_0$ be a non-negative integer. Define $[0] \deq \emptyset$ and $[n + 1] \deq \set{1, \ldots, n + 1}$. For every partial string $x$, $x^{n \cdot \Join}$ is isomorphic to $y = \tuple{E_y, \alpha_y, \preceq_y}$ where $E_y \deq E_x \times [n]$ such that, for all $e, e' \in E_x$ and $i, i' \in [n]$, the following holds:
\begin{itemize}
\item if `$\Join$' is `$\parallel$', then $\pair{e}{i} \preceq_y \pair{e'}{i'}$ exactly if $i = i'$ and $e \preceq_x e'$,
\item if `$\Join$' is `$;$', then $\pair{e}{i} \preceq_y \pair{e'}{i'}$ exactly if $i < i'$ or ($i = i'$ and $e \preceq_x e'$),
\item $\alpha_y(\pair{e}{i}) = \alpha_x(e)$.
\end{itemize}
\end{proposition}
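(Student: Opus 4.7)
The plan is to proceed by induction on $n \in \nats_0$, treating the two choices of $\Join$ uniformly where possible and splitting into cases only for the extra cross-component order in the sequential case. Throughout, the required isomorphism is a monotonic bijective morphism in both directions (equivalently, an order isomorphism preserving labels), which is exactly what Definition~\ref{def:partial-string-isomorphism} collapses to for bijections that also reflect the order.

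For the base case $n = 0$, both $x^{0 \cdot \Join} = \bot$ and the right-hand side $y$ with $E_y = E_x \times [0] = \emptyset$ are the unique empty partial string, so the (empty) bijection trivially works. For the inductive step, assume $x^{n \cdot \Join} \cong y_n$ with $E_{y_n} = E_x \times [n]$ and the stated order $\preceq_{y_n}$ and labelling $\alpha_{y_n}$. By Definition~\ref{def:program-n-iterated-composition} and after the disjoint-renaming convention following Definition~\ref{def:partial-string-composition}, $x^{(n+1) \cdot \Join} = x \Join x^{n \cdot \Join}$. I would exhibit the explicit bijection $f \colon E_x \sqcup E_{y_n} \to E_x \times [n+1]$ defined by $f(e) \deq \pair{e}{1}$ for $e \in E_x$ and $f(\pair{e}{i}) \deq \pair{e}{i+1}$ for $\pair{e}{i} \in E_{y_n}$. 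Label preservation is immediate: on the $x$-side $\alpha_{y_{n+1}}(\pair{e}{1}) = \alpha_x(e)$, and on the $y_n$-side the inductive hypothesis gives $\alpha_{y_n}(\pair{e}{i}) = \alpha_x(e) = \alpha_{y_{n+1}}(\pair{e}{i+1})$.

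The order-preservation check reduces to reading off the clauses of Definition~\ref{def:partial-string-composition}. For $\Join = \parallel$, the order on $x \parallel x^{n \cdot \Join}$ is the disjoint union of $\preceq_x$ and (by induction) $\preceq_{y_n}$; via $f$ this matches exactly the clause ``$i = i'$ and $e \preceq_x e'$'' in the statement. For $\Join = \, ;$, one adds the cross pairs $E_x \times E_{y_n}$; via $f$ these are precisely the pairs $\pair{e}{1} \preceq \pair{e'}{i'}$ with $1 < i'$, and together with the two intra-component contributions (corresponding to $i = i' = 1$ and to $1 < i = i'$) this reproduces the clause ``$i < i'$ or ($i = i'$ and $e \preceq_x e'$)''. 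Doing both directions of the equivalence yields that $f$ is a monotonic bijective morphism whose inverse is also monotonic, i.e.\ an isomorphism.

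There is no real obstacle beyond bookkeeping; the only mild subtlety is ensuring the disjointness convention for $\Join$ in Definition~\ref{def:partial-string-composition} is compatible with re-indexing $[n]$ to $[n+1]$ so that the bijection $f$ is well defined on the (renamed) disjoint union. Once this is made explicit, verifying the three bullet points of the statement amounts to a case split on whether the two events lie in the $x$-component, the $x^{n \cdot \Join}$-component, or one of each, and reading off the corresponding clause from the inductive hypothesis.
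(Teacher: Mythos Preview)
The paper states this proposition without proof, treating it as a routine structural fact about the $n$-repeated-$\Join$ operator. Your inductive argument is correct and is precisely the expected one: the base case is immediate since $[0]=\emptyset$, and the step unfolds $x^{(n+1)\cdot\Join}=x\Join x^{n\cdot\Join}$, re-indexes the fresh copy of $x$ to slot $1$ and shifts the inductive copies by one, after which the order and labelling clauses of Definition~\ref{def:partial-string-composition} match the three bullets verbatim. The only point worth tightening in a final write-up is to make explicit that you are composing your bijection $f$ with the isomorphism $x^{n\cdot\Join}\cong y_n$ supplied by the induction hypothesis (rather than silently identifying the two), but this is exactly the ``bookkeeping'' you already flag.
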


\begin{definition}[Partial string size]
\label{def:partial-string-size}
The \defn{size} of a finite partial string $p$, denoted by $\abs{p}$, is the cardinality of its event set $E_p$.
\end{definition}

For example, the partial string in Fig.~\ref{fig:partial-string-example} has size four. It is obvious that the size of finite partial strings is non-decreasing under the $n$-repeated-$\Join$ partial string operator from Definition~\ref{def:partial-string-dot-coproduct} whenever $0 < n$. This simple fact is important for the next step towards our least fixed point reduction result in Theorem~\ref{theorem:program-reduction}:

\begin{proposition}[Elementary least fixed point pre-reduction]
\label{proposition:program-forward-finite-reduction}
For all elementary programs $\cX$ and $\cY$ in $\bbP_\ell$, if the $\Join$-identity program $1$ is not in $\cY$ and $\cX \subseteq \cY^{\Join}$, then $\cX \subseteq \bigcup_{n \ge k \ge 0} \cY^{k \cdot \Join}$ where $n = \floor[\Big]{\frac{\ell_\cX}{\ell_\cY}}$ such that $\ell_\cX \deq \mathsf{max}\left\{\abs{x} \alt x \in \cX\right\}$ and $\ell_\cY \deq \mathsf{min}\left\{\abs{y} \alt y \in \cY\right\}$ is the size of the largest and smallest partial strings in $\cX$ and $\cY$, respectively.
\end{proposition}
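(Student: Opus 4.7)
The plan is a direct size-counting argument on each partial string $x \in \cX$. By Proposition~\ref{proposition:program-least-fixed-point-as-n-iterated-composition}, $\cY^{\Join} = \bigcup_{k \ge 0} \cY^{k \cdot \Join}$, so the hypothesis $\cX \subseteq \cY^{\Join}$ gives, for each $x \in \cX$, some $k \ge 0$ with $x \in \cY^{k \cdot \Join}$. The task is to bound this index $k$ from above by $n$.

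First I would establish an auxiliary unfolding lemma: by induction on $k$, using Definition~\ref{def:program-operator} together with transitivity and $\Join$-monotonicity of $\sqsubseteq$, one obtains
\[
\cY^{k \cdot \Join} \;=\; \downarrow_\sqsubseteq \set{y_1 \Join y_2 \Join \cdots \Join y_k \alt y_1,\ldots,y_k \in \cY},
\]
where for $k = 0$ the right-hand side is $\set{\bot}$ by Definition~\ref{def:program-n-iterated-composition}. Consequently, for the $x$ and $k$ fixed above, there exist $y_1,\ldots,y_k \in \cY$ with $x \sqsubseteq y_1 \Join \cdots \Join y_k$ (and $x = \bot$ when $k = 0$).

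Next, Definition~\ref{def:partial-string-isomorphism} makes refinement witnessed by a \emph{bijective} monotonic morphism on event sets, so $\abs{x} = \abs{y_1 \Join \cdots \Join y_k}$. By Definition~\ref{def:partial-string-composition}, the event set of a $\Join$-composition is the disjoint union of its constituent event sets, hence $\abs{x} = \sum_{i=1}^{k} \abs{y_i}$. Because $1 \notin \cY$, every partial string in $\cY$ is nonempty, so $\ell_\cY \ge 1$ and $\abs{y_i} \ge \ell_\cY$ for each $i$. Combining these with $\abs{x} \le \ell_\cX$ yields $k \cdot \ell_\cY \le \ell_\cX$, i.e.\ $k \le \floor{\ell_\cX / \ell_\cY} = n$. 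Therefore $x \in \bigcup_{0 \le k \le n} \cY^{k \cdot \Join}$, and since $x$ was arbitrary the desired containment follows.

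The main obstacle I anticipate is the inductive unfolding lemma in the first step: the recursive clause $\cY^{(k+1) \cdot \Join} = \cY \Join \cY^{k \cdot \Join}$ hides a nested downward closure at every level, so extracting a $k$-fold $\Join$-product of elements of $\cY$ that $x$ refines requires propagating refinements through $\Join$ and invoking transitivity of $\sqsubseteq$ uniformly. Once this is in place, the size-counting and the floor computation are elementary, and the finiteness restriction on elementary programs guarantees that both $\ell_\cX$ and $\ell_\cY$ are well-defined finite maxima and minima.
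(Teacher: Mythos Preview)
Your proof is correct and follows essentially the same size-counting strategy as the paper: pick $x\in\cX$, find $k$ with $x\in\cY^{k\cdot\Join}$, and bound $k$ by comparing $\abs{x}$ against $k\cdot\ell_\cY$. Your explicit unfolding lemma $\cY^{k\cdot\Join}=\ \downarrow_\sqsubseteq\set{y_1\Join\cdots\Join y_k \alt y_i\in\cY}$ is in fact more careful than the paper's version, which informally reduces to the special case $x\sqsubseteq y^{k\cdot\Join}$ for a single shortest $y\in\cY$; both routes yield the same inequality $k\cdot\ell_\cY\le\abs{x}\le\ell_\cX$.
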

\begin{proof}
Assume $\cX \subseteq \cY^{\Join}$. Let $x \in \sP_f$ be a finite partial string. We can assume $x \in \cX$ because $\cX \not= 0$. By assumption, $x \in \cY^{\Join}$. By Proposition~\ref{proposition:program-least-fixed-point-as-n-iterated-composition}, there exists $k \in \nats_0$ such that $x \in \cY^{k \cdot \Join}$. Fix $k$ to be the smallest such non-negative integer. Show $k \leq \floor[\Big]{\frac{\ell_\cX}{\ell_\cY}}$ (the fraction is well-defined because $\cX$ and $\cY$ are nonempty and $1 \not\in \cY$). By downward closure and definition of $\sqsubseteq$ in terms of a one-to-one correspondence, it suffices to consider that $x$ is one of a (not necessarily unique) longest partial strings in $\cX$, i.e. $\abs{x'} \leq \abs{x}$ for all $x' \in \cX$; equivalently, $\abs{x} = \ell_\cX$. If $\abs{x} = 0$, set $k = 0$, satisfying $1 = \cX \subseteq \cY^{k \cdot \Join} = 1$ and $k \leq n = 0$ as required. Otherwise, since the size of partial strings in a program can never decrease under the $k$-iterated program composition operator $\Join$ when $0 < k$, it suffices to consider the case $x \sqsubseteq y^{k \cdot \Join}$ for some shortest partial string $y$ in $\cY$. Since $E_{y^{k \cdot \Join}}$ is the Cartesian product of $E_y$ and $[k]$, it follows $\abs{x} = k \cdot \abs{y}$. Since $\abs{x} \leq \ell_\cX$ and $\ell_\cY \leq \abs{y}$, $k \leq \floor[\big]{\frac{\ell_\cX}{\ell_\cY}}$. By definition $n = \floor[\Big]{\frac{\ell_\cX}{\ell_\cY}}$, proving $x \in \bigcup_{n \ge k \ge 0} \cY^{k \cdot \Join}$. \qed
\end{proof}

Equivalently, if there exists a partial string $x$ in $\cX$ such that $x \not\in \cY^{k \cdot \Join}$ for all non-negative integers $k$ between zero and $\floor[\Big]{\frac{\ell_\cX}{\ell_\cY}}$, then $\cX \not\subseteq \cY^{\Join}$. Since we are interested in decision procedures for program refinement checking, we need to show that the converse of Proposition~\ref{proposition:program-forward-finite-reduction} also holds. Towards this end, we prove the following left $(-)^{\Join}$ elimination rule:

\begin{proposition}
\label{proposition:program-eliminate-left-lfp}
For every program $\cX$ and  $\cY$ in $\bbP$, $\cX^{\Join} \subseteq \cY^{\Join}$ exactly if $\cX \subseteq \cY^{\Join}$.
\end{proposition}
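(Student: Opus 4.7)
My plan is to prove the two directions separately. The forward direction is essentially immediate from the definitions, while the reverse direction relies on showing that $\cY^{\Join}$ is a pre-fixed point of the endofunction whose least fixed point is $\cX^{\Join}$.

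For the $(\Rightarrow)$ direction, I first observe that $\cX^{1 \cdot \Join} = \cX \Join \cX^{0 \cdot \Join} = \cX \Join 1 = \cX$, where the first equality is Definition~\ref{def:program-n-iterated-composition} and the last uses the $\Join$-unit law of Theorem~\ref{theorem:program-algebra}. Hence Proposition~\ref{proposition:program-least-fixed-point-as-n-iterated-composition} gives $\cX \subseteq \bigcup_{n \ge 0} \cX^{n \cdot \Join} = \cX^{\Join}$, and composing this with the hypothesis $\cX^{\Join} \subseteq \cY^{\Join}$ yields $\cX \subseteq \cY^{\Join}$ as required.

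For the $(\Leftarrow)$ direction, assume $\cX \subseteq \cY^{\Join}$. Since $\cX^{\Join} = \mu \cZ.\, 1 \cup (\cX \Join \cZ)$ by Theorem~\ref{theorem:program-algebra}, it suffices by least-fixed-point induction (Knaster--Tarski) to verify that $\cY^{\Join}$ is a pre-fixed point, that is, $1 \cup (\cX \Join \cY^{\Join}) \subseteq \cY^{\Join}$. The inclusion $1 \subseteq \cY^{\Join}$ follows by unfolding the fixed-point equation $\cY^{\Join} = 1 \cup (\cY \Join \cY^{\Join})$. For the other inclusion, the distributive laws of Theorem~\ref{theorem:program-algebra} render $\Join$ monotone in each argument, so the assumption $\cX \subseteq \cY^{\Join}$ gives $\cX \Join \cY^{\Join} \subseteq \cY^{\Join} \Join \cY^{\Join}$, and the task reduces to proving the idempotence $\cY^{\Join} \Join \cY^{\Join} \subseteq \cY^{\Join}$.

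The idempotence lemma is the main step. I would establish it by first showing, by induction on $n$, that $\cY^{n \cdot \Join} \Join \cY^{m \cdot \Join} = \cY^{(n + m) \cdot \Join}$ for all $n, m \in \nats_0$, using associativity of $\Join$ and the unit laws from Theorem~\ref{theorem:program-algebra} (for $\Join = \parallel$ also commutativity). Combining Proposition~\ref{proposition:program-least-fixed-point-as-n-iterated-composition} with distributivity of $\Join$ over $\cup$ then gives
\[
\cY^{\Join} \Join \cY^{\Join} \;=\; \bigcup_{n, m \ge 0} \cY^{n \cdot \Join} \Join \cY^{m \cdot \Join} \;=\; \bigcup_{k \ge 0} \cY^{k \cdot \Join} \;=\; \cY^{\Join},
\]
which closes the argument. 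The only subtlety is ensuring that the infinitary distributivity used here is justified: this is exactly the content of $\parallel$ and $;$ being quantales over $\cup$, which is part of Theorem~\ref{theorem:program-algebra}.
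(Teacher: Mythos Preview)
Your proof is correct and hinges on the same key lemma as the paper's, namely the idempotence $\cY^{\Join} \Join \cY^{\Join} = \cY^{\Join}$, established via the inductive identity $\cY^{i \cdot \Join} \Join \cY^{j \cdot \Join} = \cY^{(i+j) \cdot \Join}$ together with infinitary distributivity (the quantale structure). The forward direction is also argued identically.

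The only difference is in how the backward implication is finished off. The paper observes that idempotence of $\Join$ on $\cY^{\Join}$ makes $(-)^{\Join}$ itself idempotent as an operator, hence a closure operator, and then applies monotonicity of $(-)^{\Join}$ to get $\cX^{\Join} \subseteq (\cY^{\Join})^{\Join} = \cY^{\Join}$. You instead invoke Knaster--Tarski directly, exhibiting $\cY^{\Join}$ as a pre-fixed point of $\cZ \mapsto 1 \cup (\cX \Join \cZ)$. Your route is arguably a touch more self-contained, since the paper's phrasing tacitly relies on the monotonicity of $(-)^{\Join}$ (which is easy but not stated). One minor remark: commutativity of $\parallel$ is not actually needed for the inductive identity---associativity and the unit law suffice for both $\Join \in \{;,\parallel\}$.
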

\begin{proof}
Assume $\cX^{\Join} \subseteq \cY^{\Join}$. By Proposition~\ref{proposition:program-least-fixed-point-as-n-iterated-composition}, $\cX \subseteq \cX^{\Join}$. By transitivity of $\subseteq$ in $\bbP$, $\cX \subseteq \cY^{\Join}$. Conversely, assume $\cX \subseteq \cY^{\Join}$. Let $i, j \in \nats_0$. By induction on $i$, $\cX^{i \cdot \Join} \Join \cX^{j \cdot \Join} = \cX^{(i+j) \cdot \Join}$. Thus, by Proposition~\ref{proposition:program-least-fixed-point-as-n-iterated-composition} and distributivity of $\Join$ over least upper bounds in $\bbP$, $\cX^{\Join} \Join \cX^{\Join} = \cX^{\Join}$, i.e. $(-)^{\Join}$ is idempotent. This, in turn, implies that $(-)^{\Join}$ is a closure operator. Therefore, by monotonicity, $\cX^{\Join} \subseteq \left(\cY^{\Join}\right)^{\Join} = \cY^{\Join}$, proving that $\cX^{\Join} \subseteq \cY^{\Join}$ is equivalent to $\cX \subseteq \cY^{\Join}$. \qed
\end{proof}

\begin{theorem}[Elementary least fixed point reduction]
\label{theorem:program-reduction}
For all elementary programs $\cX$ and $\cY$ in $\bbP_\ell$, if the $\Join$-identity program $1$ is not in $\cY$, then $\cX^{\Join} \subseteq \cY^{\Join}$ is equivalent to $\cX \subseteq \bigcup_{n \ge k \ge 0} \cY^{k \cdot \Join}$ where $n = \floor[\Big]{\frac{\ell_\cX}{\ell_\cY}}$ such that $\ell_\cX \deq \mathsf{max}\left\{\abs{x} \alt x \in \cX\right\}$ and $\ell_\cY \deq \mathsf{min}\left\{\abs{y} \alt y \in \cY\right\}$ is the size of the largest and smallest partial strings in $\cX$ and $\cY$, respectively.
\end{theorem}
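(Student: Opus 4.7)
The plan is to chain the two preceding results, since they already do almost all the work. Specifically, Proposition~\ref{proposition:program-eliminate-left-lfp} lets us strip the least fixed point off the left-hand side of the refinement, reducing $\cX^{\Join} \subseteq \cY^{\Join}$ to $\cX \subseteq \cY^{\Join}$. Then Proposition~\ref{proposition:program-forward-finite-reduction} bounds the iterates of $\cY$ that need to be considered by $n = \floor{\ell_\cX / \ell_\cY}$, which is precisely the statement we want.

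Concretely, for the forward direction, I assume $\cX^{\Join} \subseteq \cY^{\Join}$ and apply Proposition~\ref{proposition:program-eliminate-left-lfp} to obtain $\cX \subseteq \cY^{\Join}$. Since $\cX,\cY \in \bbP_\ell$ and $1 \not\in \cY$, Proposition~\ref{proposition:program-forward-finite-reduction} then yields $\cX \subseteq \bigcup_{n \ge k \ge 0} \cY^{k \cdot \Join}$ with the claimed $n$.

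For the converse, I assume $\cX \subseteq \bigcup_{n \ge k \ge 0} \cY^{k \cdot \Join}$ and observe that $\bigcup_{n \ge k \ge 0} \cY^{k \cdot \Join} \subseteq \bigcup_{k \ge 0} \cY^{k \cdot \Join} = \cY^{\Join}$ by Proposition~\ref{proposition:program-least-fixed-point-as-n-iterated-composition}, hence $\cX \subseteq \cY^{\Join}$. Applying Proposition~\ref{proposition:program-eliminate-left-lfp} in the other direction then gives $\cX^{\Join} \subseteq \cY^{\Join}$.

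There is no real obstacle here beyond verifying that the hypotheses of the two cited results (elementarity of $\cX,\cY$, nonemptiness so that $\ell_\cX$ and $\ell_\cY$ are defined, and $1 \not\in \cY$ so that $\ell_\cY > 0$ and the quotient $\ell_\cX / \ell_\cY$ is well-defined) are all inherited from the hypotheses of Theorem~\ref{theorem:program-reduction}. The conceptual content lives entirely in Proposition~\ref{proposition:program-forward-finite-reduction} (the counting argument using $\abs{y^{k \cdot \Join}} = k \cdot \abs{y}$) and in Proposition~\ref{proposition:program-eliminate-left-lfp} (the fact that $(-)^{\Join}$ is a closure operator); the theorem itself is simply their combination.
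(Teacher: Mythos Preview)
Your proof is correct and follows exactly the same route as the paper: apply Proposition~\ref{proposition:program-eliminate-left-lfp} to reduce $\cX^{\Join} \subseteq \cY^{\Join}$ to $\cX \subseteq \cY^{\Join}$, then use Proposition~\ref{proposition:program-forward-finite-reduction} for the forward implication and Proposition~\ref{proposition:program-least-fixed-point-as-n-iterated-composition} for the backward implication. The only difference is that you spell out the hypothesis checks and the inclusion $\bigcup_{n \ge k \ge 0} \cY^{k \cdot \Join} \subseteq \cY^{\Join}$ more explicitly than the paper does.
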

\begin{proof}
By Proposition~\ref{proposition:program-eliminate-left-lfp}, it remains to show that $\cX \subseteq \cY^{\Join}$ is equivalent to $\cX \subseteq \bigcup_{n \ge k \ge 0} \cY^{k \cdot \Join}$ where $n = \floor[\Big]{\frac{\ell_\cX}{\ell_\cY}}$. The forward and backward implication follow from Proposition~\ref{proposition:program-forward-finite-reduction}~and~\ref{proposition:program-least-fixed-point-as-n-iterated-composition}, respectively. \qed
\end{proof}

From Theorem~\ref{theorem:program-reduction} follows immediately that $\cX^{\Join} \subseteq \cY^{\Join}$ is decidable for all elementary programs $\cX$ and $\cY$ in $\bbP_\ell$ because there exists an algorithm that could iteratively make $O\left(\abs{\cX} \times \abs{\cY}^n\right)$ calls to another decision procedure to check whether $x \sqsubseteq y$ for all $x \in \cX$ and $y \in \cY^{k \cdot \Join}$ where $n \geq k \geq 0$. However, by Proposition~\ref{proposition:psr-NP-complete}, each iteration in such an algorithm would have to solve an NP-complete subproblem. But this high complexity is expected since the PLC problem is $\Pi_2^p$-complete~\cite{FKL1993}.

\begin{corollary}
For all elementary programs $\cX$ and $\cY$ in $\bbP$, if $\abs{x} = \abs{y}$ for all $x \in \cX$ and $y \in \cY$, then $\cX^{\Join} \subseteq \cY^{\Join}$ is equivalent to $\cX \subseteq \cY$.
\end{corollary}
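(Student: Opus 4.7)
The plan is to derive the corollary directly from Theorem~\ref{theorem:program-reduction} by plugging in the equal-size hypothesis and handling a small edge case.

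First I would note that the backward direction is easy: if $\cX \subseteq \cY$, then $\cX^{\Join} \subseteq \cY^{\Join}$ follows by monotonicity of $(-)^{\Join}$, which is established in the proof of Proposition~\ref{proposition:program-eliminate-left-lfp} (where $(-)^{\Join}$ is shown to be a closure operator). So the content is the forward direction. For this, observe that since monotonic bijective morphisms are bijections on event sets, partial string refinement preserves size, and so the hypothesis forces every partial string in $\cX$ and every partial string in $\cY$ to have a common size $\ell$.

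Next I would split on whether $\ell = 0$. If $\ell = 0$, then every partial string in either program equals the unique empty partial string $\bot$, so by elementariness (hence nonemptiness) $\cX = \cY = 1$, and both the hypothesis $\cX^{\Join} \subseteq \cY^{\Join}$ and the conclusion $\cX \subseteq \cY$ hold trivially. Otherwise $\ell \ge 1$, in which case $\bot \notin \cY$, so the $\Join$-identity program $1$ is not contained in $\cY$ and Theorem~\ref{theorem:program-reduction} applies. With $\ell_\cX = \ell_\cY = \ell$ we get $n = \floor[\big]{\ell/\ell} = 1$, so the theorem yields
\begin{equation*}
\cX^{\Join} \subseteq \cY^{\Join} \quad \text{iff} \quad \cX \subseteq \cY^{0 \cdot \Join} \cup \cY^{1 \cdot \Join} = 1 \cup \cY.
\end{equation*}

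Finally, since every element of $\cX$ has size $\ell \ge 1$ while $1 = \set{\bot}$ consists only of the empty partial string, no element of $\cX$ lies in $1$, and therefore $\cX \subseteq 1 \cup \cY$ is equivalent to $\cX \subseteq \cY$. Combining this with the theorem gives the stated equivalence. I don't anticipate any real obstacle here; the only subtlety is remembering that the size hypothesis is shared across both programs (forcing a common $\ell$) and dispatching the $\ell = 0$ case so that the $1\not\in\cY$ precondition of Theorem~\ref{theorem:program-reduction} is satisfied.
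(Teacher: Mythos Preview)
Your proposal is correct and follows exactly the route the paper intends: the corollary is stated in the paper without proof as an immediate consequence of Theorem~\ref{theorem:program-reduction}, and your argument is the natural unpacking of that claim, including the small $\ell=0$ edge case needed to meet the $1\not\in\cY$ hypothesis.
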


We next move on to enriching our model of computation to accommodate a certain kind of relaxed sequential consistency.

\section{Relaxed sequential consistency}
\vspace{-0.3em}
\label{section:SC-relaxed}

For efficiency reasons, all modern computer architectures implement some form of weak memory model rather than sequential consistency~\cite{L1979}. A defining characteristic of weak memory architectures is that they violate interleaving semantics unless specific instructions are used to restore sequential consistency. This section fixes a particular interpretation of weak memory and studies the mathematical properties of the resulting partial order semantics. For this, we separate memory accesses into synchronizing and non-synchronizing ones, akin to~\cite{GLLGGH1990}. A synchronized store is called a \emph{release}, whereas a synchronized load is called an \emph{acquire}. The intuition behind release/acquire is that prior writes made to other memory locations by the thread executing the release become visible in the thread that performs the corresponding acquire. Crucially, the particular form of release/acquire semantics that we formalize here is shown to be equivalent to the conjunction of three weak memory axioms (Theorem~\ref{theorem:SC-relaxed-equivalence}), namely `write coherence', `from-read' and `global read-from'~\cite{AMSS2012}. Subsequently, we look at one important ramification of this equivalence on \emph{bounded model checking} (BMC) techniques for finding concurrency-related bugs (Theorem~\ref{theorem:smaller-fr}).

We start by defining the alphabet that we use for identifying events that denote synchronizing and non-synchronizing memory accesses.

\newcommand{\memorylocation}{\langle\textit{ADDRESS}\rangle}
\newcommand{\register}{\langle\textit{REG}\rangle}
\newcommand{\loadmemoryorder}{\langle\textit{LOAD}\rangle}
\newcommand{\storememoryorder}{\langle\textit{STORE}\rangle}
\newcommand{\bit}{\langle\textit{BIT}\rangle}

\begin{definition}[Memory access alphabet]
\label{def:memory-access-alphabet}
Define $\loadmemoryorder \deq \set{\mathsf{none}, \mathsf{acquire}}$, $\storememoryorder \deq \set{\mathsf{none}, \mathsf{release}}$ and $\bit \deq \set{0,1}$. Let $\memorylocation$ and $\register$ be disjoint sets of \defn{memory locations} and \defn{registers}, respectively. Let $\mathit{load\_tag} \in \loadmemoryorder$ and $\mathit{store\_tag} \in \storememoryorder$. Define the set of \defn{load} and \defn{store} labels, respectively:
\label{def:alphabet}
\begin{align*}
\Gamma_{\mathsf{load},\,\mathit{load\_tag}} &\deq \set{\mathit{load\_tag}} \times \register \times \memorylocation \\
\Gamma_{\mathsf{store},\,\mathit{store\_tag}} &\deq \set{\mathit{store\_tag}} \times \memorylocation \times \bit
\end{align*}
Let $\Gamma \deq \Gamma_{\mathsf{load}, \mathsf{none}} \cup \Gamma_{\mathsf{load}, \mathsf{acquire}} \cup \Gamma_{\mathsf{store}, \mathsf{none}} \cup \Gamma_{\mathsf{store}, \mathsf{release}}$ be the \defn{memory access alphabet}. Given $r \in \register$, $a \in \memorylocation$ and $b \in \bit$, we write $`r\, \texttt{:=}\, [a]_\mathit{load\_tag}\textrm'$ for the label $\tuple{\mathit{load\_tag}, r, a}$ in $\Gamma_{\mathsf{load},\,\mathit{load\_tag}}$; similarly, $`[a]_\mathit{store\_tag}\,\texttt{:=}\,b\textrm'$ is shorthand for the label $\tuple{\mathit{store\_tag}, a, b}$ in $\Gamma_{\mathsf{store},\,\mathit{store\_tag}}$.

Let $x$ be a partial string and $e$ be an event in $E_x$. Then $e$ is called a \defn{load} or \defn{store} if its label, $\alpha_x(e)$, is in $\Gamma_{\mathsf{load},\,\mathit{load\_tag}}$ or $\Gamma_{\mathsf{store},\,\mathit{store\_tag}}$, respectively. A load or store event $e$ is a \defn{non-synchronizing memory access} if $\alpha_x(e) \in \Gamma_\mathsf{none} \deq \Gamma_{\mathsf{load}, \mathsf{none}} \cup \Gamma_{\mathsf{store}, \mathsf{none}}$; otherwise, it is a \defn{synchronizing memory access}. Let $a \in \memorylocation$ be a memory location. An \defn{acquire on $a$} is an event $e$ such that $\alpha_x(e) = `r\, \texttt{:=}\, [a]_\mathsf{acquire}\textrm'$ for some $r \in \register$. Similarly, a \defn{release on $a$} is an event $e$ labelled by $`[a]_\mathsf{release}\,\texttt{:=}\,b\textrm'$ for some $b \in \bit$. A \defn{release} and \defn{acquire} is a release and acquire on some memory location, respectively.
\end{definition}

\begin{SCfigure}[100][t]
\begin{tabular}{@{}l@{\hspace{2mm}} || @{\hspace{2mm}}l@{}}
\multicolumn{1}{c}{Thread $\texttt{T}_1$}    & \multicolumn{1}{c}{Thread $\texttt{T}_2$}       \\
\midrule
  $r_0$\,\texttt{:=}\,$[b]_\mathsf{acquire}$ & $[a]_\mathsf{none}$\,\texttt{:=}\,\texttt{1}    \\
  $r_1$\,\texttt{:=}\,$[a]_\mathsf{none}$    & $[b]_\mathsf{release}$\,\texttt{:=}\,\texttt{1}
\end{tabular}
\caption{A concurrent system $\texttt{T}_1\,\parallel\,\texttt{T}_2$ consisting of two threads. The memory accesses on memory locations $b$ are synchronized, whereas those on $a$ are not.}
\label{fig:intro-example-racy}
\end{SCfigure}

\begin{example}
\label{example:memory-access}
Fig.~\ref{fig:intro-example-racy} shows the syntax of a program that consists of two threads $\texttt{T}_1$ and $\texttt{T}_2$. This concurrent system can be directly modelled by the partial string shown in Fig.~\ref{fig:partial-string-example} where memory location $b$ is accessed through acquire and release, whereas memory location $a$ is accessed through non-synchronizing loads and stores (shortly, we shall see that this leads to a data race).
\end{example}

Given Definition~\ref{def:memory-access-alphabet}, we are now ready to refine our earlier conservative over-approximation of the happens-before relations (Definition~\ref{def:program}) to get a particular form of release/acquire semantics. For this, we restrict the downward closure of programs $\cX$ in $\bbP$, in the sense of Definition~\ref{def:program}, by requiring all partial strings in $\cX$ to satisfy the following partial ordering constraints:

\begin{definition}[SC-relaxed program]
\label{def:SC-relaxed-program}
A program $\cX$ is called \defn{SC-relaxed} if, for all $a \in \memorylocation$ and partial string $x$ in $\cX$, the set of release events on $a$ is totally ordered by $\preceq_x$ and, for every acquire $l \in E_x$ and release $s \in E_x$ on $a$, $l \preceq_x s$ or $s \preceq_x l$.
\end{definition}

Henceforth, we denote loads and stores by $l, l'$ and $s, s'$, respectively. If $s$ and $s'$ are release events that modify the same memory location, either $s$ happens-before $s'$, or vice versa.  If $l$ is an acquire and $s$ is a release on the same memory location, either $l$ happens-before $s$ or $s$ happens-before $l$. Importantly, however, two acquire events $l$ and $l'$ on the same memory location may still happen concurrently in the sense that neither $l$ happens-before $l'$ nor $l'$ happens-before $l$, in the same way non-synchronizing memory accesses are generally unordered.

\begin{example}
\label{example:memory-access-with-data-race}
Example~\ref{example:program}~and~\ref{example:memory-access} illustrate the SC-relaxed semantics of the concurrent system in Fig.~\ref{fig:intro-example-racy}. In particular, the $N$-shaped partial string in Fig.~\ref{fig:sqsubseteq-partial-string} corresponds to a data race in $\texttt{T}_1\,\parallel\,\texttt{T}_2$ because the non-synchronizing memory accesses on memory location $a$ happen concurrently. To see this, it may help to consider the interleaving $r_0\, \texttt{:=}\, [b]_\mathsf{acquire} ; [a]_\mathsf{none}\,\texttt{:=}\,1 ; r_1\, \texttt{:=}\, [a]_\mathsf{none} ; [b]_\mathsf{release}\,\texttt{:=}\,1$ where both memory accesses on location $a$ are unordered through the happens-before relation because there is no release instruction separating $[a]_\mathsf{none}\,\texttt{:=}\,1$ from $r_1\, \texttt{:=}\, [a]_\mathsf{none}$. One way of fixing this data race is by changing thread $\texttt{T}_1$ to $\mathbf{if}\ [b]_\mathsf{acquire} = 1\ \mathbf{then}\ r_1\,\texttt{:=}\,[a]_\mathsf{none}$. Since CKA supports non-deterministic choice with the $\cup$ binary operator (recall Theorem~\ref{theorem:program-algebra}), it would not be difficult to give semantics to such conditional checks, particularly if we introduce `assume' labels into the alphabet in Definition~\ref{def:memory-access-alphabet}.
\end{example}

We ultimately want to show that the conjunction of three existing weak memory axioms as studied in~\cite{AMSS2012} fully characterizes our particular interpretation of relaxed sequential consistency, thereby paving the way for Theorem~\ref{theorem:smaller-fr}. For this, we recall the following memory axioms which can be thought of as relations on loads and stores on the same memory location:

\begin{definition}[Memory axioms]
\label{def:memory-axioms}
Let $x$ be a partial string in $\sP_f$. The \defn{read-from} function, denoted by $\mathsf{rf} \colon E_x \to E_x$, is defined to map every load to a store on the same memory location. A load $l$ \defn{synchronizes-with} a store $s$ if $\mathsf{rf}(l) = s$ implies $s \preceq_x l$. \defn{Write-coherence} means that all stores $s, s'$ on the same memory location are totally ordered by $\preceq_x$. The \defn{from-read axiom} holds whenever, for all loads $l$ and stores $s, s'$ on the same memory location, if $\mathsf{rf}(l) = s$ and $s \prec_x s'$, then $l \preceq_x s'$.
\end{definition}

By definition, the read-from function is total on all loads. The synchronizes-with axiom says that if a load reads-from a store (necessarily on the same memory location), then the store happens-before the load. This is also known as the global read-from axiom~\cite{AMSS2012}. Write-coherence, in turn, ensures that all stores on the same memory location are totally ordered. This corresponds to the fact that ``all writes to the same location are serialized in some order and are performed in that order with respect to any processor''~\cite{GLLGGH1990}. Note that this is different from the modification order (`mo') on atomics in C++14~\cite{CPP14} because `mo' is generally not a subset of the happens-before relation. The from-read axiom~\cite{AMSS2012} requires that, for all loads $l$ and two different stores $s, s'$ on the same location, if $l$ reads-from $s$ and $s$ happens-before $s'$, then $l$ happens-before $s'$.
We start by deriving from these three memory axioms the notion of SC-relaxed programs.

\begin{proposition}[SC-relaxed consistency]
\label{proposition:SC-relaxed-consistency}
For all $\cX$ in $\bbP$, if, for each partial string $x$ in $\cX$, the synchronizes-with, write-coherence and from-read axioms hold on all release and acquire events in $E_x$ on the same memory location, then $\cX$ is an SC-relaxed program.
\end{proposition}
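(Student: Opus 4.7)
The plan is to fix an arbitrary program $\cX$ satisfying the hypothesis, a memory location $a \in \memorylocation$, and a partial string $x \in \cX$, and then verify directly the two clauses of Definition~\ref{def:SC-relaxed-program}: (i) the release events on $a$ are totally ordered by $\preceq_x$, and (ii) for every acquire $l$ and release $s$ on $a$, either $l \preceq_x s$ or $s \preceq_x l$. Clause (i) follows immediately from write-coherence applied to the set of release events on $a$ (since releases are stores), so the substantive work is (ii).

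For (ii), fix an acquire $l \in E_x$ and a release $s \in E_x$, both on $a$. By the synchronizes-with axiom, $\mathsf{rf}(l)$ is a store (hence a release, since the axioms are assumed to hold on releases and acquires on $a$) on $a$ with $\mathsf{rf}(l) \preceq_x l$. Now apply write-coherence to the two releases $\mathsf{rf}(l)$ and $s$ on $a$; this yields three mutually exclusive cases which I would handle in turn:
\begin{itemize}
\item If $\mathsf{rf}(l) = s$, then $s = \mathsf{rf}(l) \preceq_x l$, so $s \preceq_x l$.
\item If $\mathsf{rf}(l) \prec_x s$, then invoke the from-read axiom with loads $l$ and stores $\mathsf{rf}(l), s$ to obtain $l \preceq_x s$.
\item If $s \prec_x \mathsf{rf}(l)$, then $s \prec_x \mathsf{rf}(l) \preceq_x l$, so by transitivity of $\preceq_x$ we get $s \preceq_x l$.
\end{itemize}
In every case, $l$ and $s$ are comparable under $\preceq_x$, which is precisely clause (ii).

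The argument is essentially a short case analysis, so there is no single hard step; the only point worth being careful about is the direction in which each axiom is used and the fact that the totality produced by write-coherence on releases is what powers the trichotomy. Note that the downward closure defining $\cX$ as a program plays no role here, because the conclusion is a universal statement over the partial strings of $\cX$ and the hypothesis is likewise universal; in particular, nothing beyond the per-partial-string axiomatic conditions is needed.
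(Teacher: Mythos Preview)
Your proof is correct and follows essentially the same route as the paper: establish the total order on releases by write-coherence, then for an acquire $l$ and release $s$ on the same location, use totality of $\mathsf{rf}$ together with synchronizes-with to get $\mathsf{rf}(l)\preceq_x l$, invoke write-coherence to compare $\mathsf{rf}(l)$ with $s$, and finish the two nontrivial cases via from-read and transitivity respectively. The only cosmetic difference is that the paper names $\mathsf{rf}(l)$ as $s$ and the target release as $s'$, and folds your equality case into the assumption $s\neq s'$.
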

\begin{proof}
Let $a \in \memorylocation$ be a memory location, $l$ be an acquire on $a$ and $s'$ be a release on $a$. By write-coherence on release/acquire events, it remains to show $l \preceq_x s'$ or $s' \preceq_x l$. Since the read-from function is total, $\mathsf{rf}(l) = s$ for some release $s$ on $a$. By the synchronizes-with axiom, $s \preceq_x l$. We therefore assume $s \not= s'$. By write-coherence, $s \prec_x s'$ or $s' \prec_x s$. The former implies $l \preceq_x s'$ by the from-read axiom, whereas the latter implies $s' \preceq_x l$ by transitivity. This proves, by case analysis, that $\cX$ is an SC-relaxed program. \qed
\end{proof}

We need to prove some form of converse of the previous implication in order to characterize SC-relaxed semantics in terms of the three aforementioned weak memory axioms. For this purpose, we define the following:

\begin{definition}[Read consistency]
\label{def:read-consistency}
Let $a \in \memorylocation$ be a memory location and $x$ be a finite partial string in $\sP_f$. For all loads $l \in E_x$ on $a$, define the following set of store events: $\cH_x(l) \deq \set{s \in E_x \alt s \preceq_x l \text{ and } s \text{ is a store on } a}$. The read-from function $\mathsf{rf}$ is said to satisfy \defn{weak read consistency} whenever, for all loads $l \in E_x$ and stores $s \in E_x$ on memory location $a$, the least upper bound $\bigvee \cH_x(l)$ exists, and $\mathsf{rf}(l) = s$ implies $\bigvee \cH_x(l) \preceq_x s$; \defn{strong read consistency} implies $\mathsf{rf}(l) = s = \bigvee \cH_x(l)$.
\end{definition}

By the next proposition, a natural sufficient condition for the existence of the least upper bound $\bigvee \cH_x(l)$ is the finiteness of the partial strings in $\sP_f$ and the total ordering of all stores on the same memory location from which the load $l$ reads, i.e. write coherence. This could be generalized to well-ordered sets.

\begin{proposition}[Weak read consistency existence]
\label{proposition:weak-read-consistency-existence}
For all partial strings $x$ in $\sP_f$, write coherence on memory location $a$ implies that $\bigvee \cH_x(l)$ exists for all loads $l$ on $a$.
\end{proposition}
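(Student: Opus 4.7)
The plan is to exhibit $\cH_x(l)$ as a finite chain in the partial order $\preceq_x$, after which the existence of its least upper bound reduces to the elementary fact that a finite nonempty totally ordered set contains a maximum.

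First, I would invoke finiteness: since $x \in \sP_f$, the event set $E_x$ is finite, so $\cH_x(l) \subseteq E_x$ inherits this finiteness. Next, I would use the hypothesis of write coherence on $a$, which says that every pair of stores on $a$ is comparable under $\preceq_x$. Because every element of $\cH_x(l)$ is by definition a store on $a$, the restriction of $\preceq_x$ to $\cH_x(l)$ is a total order. Together these two observations present $\cH_x(l)$ as a finite chain in the poset $\pair{E_x}{\preceq_x}$.

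From here the conclusion is immediate: any finite nonempty chain in a partial order has a (necessarily unique) maximum element, and that maximum is automatically a least upper bound of the chain. Setting $\bigvee \cH_x(l) \deq \max_{\preceq_x} \cH_x(l)$ yields the required lub. The one corner to address is the vacuous case $\cH_x(l) = \emptyset$, where no store on $a$ happens-before $l$; here $\bigvee \emptyset$ only exists if the ambient poset has a bottom, but the defining implication of weak read consistency in Definition~\ref{def:read-consistency} is triggered solely when some store $s$ with $\mathsf{rf}(l) = s$ is forced into $\cH_x(l)$, so this edge case does not obstruct downstream use. The only real obstacle is therefore bookkeeping: verifying that the total order furnished by write coherence really restricts to $\cH_x(l)$ without any additional assumption beyond what the statement already gives.
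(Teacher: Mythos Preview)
Your argument is correct and matches the paper's own reasoning: the paper does not give a formal proof of this proposition, but the surrounding text justifies it exactly as you do, by appealing to finiteness of $E_x$ together with the total order on stores furnished by write coherence (and even notes the generalization to well-ordered sets). The one place where your treatment diverges slightly is the empty case: the paper simply stipulates in the remark following the proposition that $\bigvee \cH_x(l) = \bot$ when $\cH_x(l) = \emptyset$ (or alternatively assumes initializing stores), whereas you argue that the case is harmless for downstream use; both are adequate, but you may wish to align with the paper's convention rather than leave the lub undefined.
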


We remark that $\bigvee \cH_x(l) = \bot$ if $\cH_x(l) = \emptyset$; alternatively, to avoid that $\cH_x(l)$ is empty, we could require that programs are always constructed such that their partial strings have minimal store events that initialize all memory locations.

\begin{proposition}[Weak read consistency equivalence]
\label{proposition:read-consistency-characterization}
Write coherence implies that weak read consistency is equivalent to the following: for all loads $l$ and stores $s, s'$ on memory location $a \in \memorylocation$, if $\mathsf{rf}(l) = s$ and $s' \preceq_x l$, then $s' \preceq_x s$.
\end{proposition}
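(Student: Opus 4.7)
The plan is to prove both implications under the standing hypothesis of write coherence, which is needed precisely to guarantee the existence of $\bigvee \cH_x(l)$ via Proposition~\ref{proposition:weak-read-consistency-existence}. The key observation is that the alternative condition can be read as: whenever $\mathsf{rf}(l) = s$, the store $s$ is an upper bound of $\cH_x(l)$. This is because, by definition, $s' \in \cH_x(l)$ is exactly the conjunction ``$s'$ is a store on $a$ and $s' \preceq_x l$.'' Once this reformulation is in place, the equivalence is essentially the universal property of the least upper bound.

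For the forward direction, assume weak read consistency and fix a load $l$ and stores $s, s'$ on $a$ with $\mathsf{rf}(l) = s$ and $s' \preceq_x l$. Then $s' \in \cH_x(l)$ by definition, so $s' \preceq_x \bigvee \cH_x(l)$ by the supremum property, and $\bigvee \cH_x(l) \preceq_x s$ by weak read consistency. Transitivity of $\preceq_x$ then yields $s' \preceq_x s$, as required.

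For the converse, assume the alternative condition. Invoke Proposition~\ref{proposition:weak-read-consistency-existence}: write coherence on $a$ ensures that $\bigvee \cH_x(l)$ exists for every load $l$ on $a$. Given $\mathsf{rf}(l) = s$, the alternative condition states that every $s' \in \cH_x(l)$ satisfies $s' \preceq_x s$, so $s$ is an upper bound of $\cH_x(l)$. By the defining universal property of the least upper bound, $\bigvee \cH_x(l) \preceq_x s$, which is exactly weak read consistency.

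There is no substantial obstacle here; the proof is essentially an unpacking of the definition of $\cH_x(l)$ followed by a routine appeal to the universal property of suprema. The only point of genuine substance to track is where write coherence is consumed, namely in the backward direction to secure existence of $\bigvee \cH_x(l)$; in the forward direction, existence is already part of the hypothesis of weak read consistency, so write coherence is not strictly needed there.
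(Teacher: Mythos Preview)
Your proof is correct and follows essentially the same route as the paper's: both directions reduce to the observation that the alternative condition is precisely ``$s$ is an upper bound of $\cH_x(l)$,'' and then the universal property of $\bigvee \cH_x(l)$ does the rest. Your remark that write coherence is only genuinely consumed in the backward direction (for existence of the supremum) is a nice sharpening; the paper invokes write coherence for existence in the forward direction as well, though as you note this is redundant since existence is already built into the definition of weak read consistency.
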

\begin{proof}
By write coherence, $\bigvee \cH_x(l)$ exists, and $s' \preceq_x \bigvee \cH_x(l)$ because $s' \in \cH_x(l)$ by assumption $s' \preceq_x l$ and Definition~\ref{def:read-consistency}. By assumption of weak read consistency, $\bigvee \cH_x(l) \preceq_x s$. From transitivity follows $s' \preceq_x s$.

Conversely, assume $\mathsf{rf}(l) = s$. Let $s'$ be a store on $a$ such that $s' \in \cH_x(l)$. Thus, by hypothesis, $s' \preceq_x s$. Since $s'$ is arbitrary, $s$ is an upper bound. Since the least upper bound is well-defined by write coherence, $\bigvee \cH_x(l) \preceq_x s$. \qed
\end{proof}

Weak read consistency therefore says that if a load $l$ reads from a store $s$ and another store $s'$ on the same memory location happens before $l$, then $s'$ happens before $s$. This implies the next proposition. 

\begin{proposition}[From-read equivalence]
\label{proposition:read-consistency-equivalence}
For all SC-relaxed programs in $\bbP$, weak read consistency with respect to release/acquire events is equivalent to the from-read axiom with respect to release/acquire events.
\end{proposition}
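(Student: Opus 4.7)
The plan is to prove both implications by case analysis on the SC-relaxed ordering of release/acquire events and the write coherence of releases, exploiting antisymmetry of $\preceq_x$ to discharge the impossible cases. Throughout, I will use the equivalent formulation of weak read consistency supplied by Proposition~\ref{proposition:read-consistency-characterization}: if $\mathsf{rf}(l) = s$ and $s' \preceq_x l$, then $s' \preceq_x s$. Note that SC-relaxedness gives write coherence on releases for free (releases on the same location are totally ordered), so this reformulation applies.

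For the forward direction (weak read consistency implies from-read), I would assume that $\mathsf{rf}(l) = s$ and $s \prec_x s'$, where $l$ is an acquire and $s,s'$ are releases on a common memory location $a$. By the SC-relaxed condition, $l$ and $s'$ are comparable, so either $l \preceq_x s'$, in which case the from-read conclusion is immediate, or $s' \preceq_x l$. In the latter case, the weak-read-consistency characterization yields $s' \preceq_x s$, which together with the hypothesis $s \prec_x s'$ contradicts antisymmetry. So only the first case survives.

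For the converse (from-read implies weak read consistency), I would assume $\mathsf{rf}(l) = s$ and $s' \preceq_x l$ for releases $s,s'$ on $a$ and an acquire $l$ on $a$, and aim to show $s' \preceq_x s$. Write coherence (inherited from SC-relaxedness) yields three cases: $s = s'$ (trivial), $s' \preceq_x s$ (done), or $s \prec_x s'$. In the last case, from-read gives $l \preceq_x s'$; combined with $s' \preceq_x l$, antisymmetry forces $s' = l$, which is impossible because $l$ is a load and $s'$ is a store (different label types from Definition~\ref{def:memory-access-alphabet}). Hence this branch is vacuous and the required inequality holds.

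The main obstacle, if any, is simply bookkeeping: making sure that each appeal to comparability is genuinely justified by the hypotheses (SC-relaxedness for the acquire/release pair $l,s'$, write coherence for the release/release pair $s,s'$) and that the antisymmetry/type-mismatch contradictions are stated cleanly. Since all of this is routine given Propositions~\ref{proposition:weak-read-consistency-existence} and~\ref{proposition:read-consistency-characterization}, I expect the final write-up to be short.
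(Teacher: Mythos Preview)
Your proposal is correct and follows the same line the paper indicates: the paper does not spell out a proof for this proposition but remarks that it is implied by the reformulation of weak read consistency in Proposition~\ref{proposition:read-consistency-characterization}, which is precisely the tool you use for both directions. Your case analyses via SC-relaxed comparability and antisymmetry are the expected way to complete the argument the paper leaves implicit.
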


We can characterize strong read consistency as follows:

\begin{proposition}[Strong read consistency equivalence]
\label{proposition:strong-read-consistency-equivalence}
Strong read consistency is equivalent to weak read consistency and the synchronizes-with axiom.
\end{proposition}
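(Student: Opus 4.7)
The plan is to prove the two directions directly from the definitions, with no heavy machinery needed. The key observation is that $l$ is itself an upper bound of $\cH_x(l)$ by construction (since every element of $\cH_x(l)$ is, by definition, $\preceq_x l$), and that if $\mathsf{rf}(l) = s$ then $s$ is a store on the same memory location $a$ as $l$, so membership of $s$ in $\cH_x(l)$ is determined purely by whether $s \preceq_x l$.

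For the forward direction, I assume strong read consistency: $\mathsf{rf}(l) = s$ implies $s = \bigvee \cH_x(l)$. Weak read consistency follows immediately from reflexivity of $\preceq_x$. For the synchronizes-with axiom I argue that, since every element of $\cH_x(l)$ is below $l$, the load $l$ is an upper bound of $\cH_x(l)$; hence the least upper bound $s = \bigvee \cH_x(l)$ satisfies $s \preceq_x l$, as required.

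For the backward direction, I assume both weak read consistency and the synchronizes-with axiom. Given $\mathsf{rf}(l) = s$, synchronizes-with yields $s \preceq_x l$. Because $s$ is a store on the same memory location $a$ as $l$ (by totality of $\mathsf{rf}$ and its codomain constraint in Definition~\ref{def:memory-axioms}), this places $s$ in $\cH_x(l)$, so $s \preceq_x \bigvee \cH_x(l)$. Weak read consistency gives the opposite inequality $\bigvee \cH_x(l) \preceq_x s$, and antisymmetry of the partial order $\preceq_x$ yields $s = \bigvee \cH_x(l)$, i.e.\ strong read consistency.

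The only subtlety is ensuring that $\bigvee \cH_x(l)$ actually exists in both directions. In the forward direction the equality $s = \bigvee \cH_x(l)$ already presupposes existence, and in the backward direction existence is part of the weak read consistency hypothesis (Definition~\ref{def:read-consistency}). So no additional side condition, such as invoking Proposition~\ref{proposition:weak-read-consistency-existence}, is needed. I expect no real obstacle here; the whole proof is essentially a three-line antisymmetry argument once the definitions are unfolded, and the main care is simply in justifying why $s \in \cH_x(l)$ and why $l$ is an upper bound of $\cH_x(l)$.
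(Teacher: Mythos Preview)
Your proposal is correct and follows essentially the same route as the paper: the forward direction uses that $l$ is an upper bound of $\cH_x(l)$ to get $\bigvee \cH_x(l) \preceq_x l$, and the backward direction combines synchronizes-with (to place $s$ in $\cH_x(l)$) with weak read consistency and antisymmetry. Your explicit remarks about existence of $\bigvee \cH_x(l)$ and why $s \in \cH_x(l)$ are helpful elaborations but do not change the argument.
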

\begin{proof}
Let $x$ be a partial string in $\sP_f$. Let $l$ be a load and $s$ be a store on the same memory location. The forward implication is immediate from $\bigvee \cH_x(l) \preceq_x l$.

Conversely, assume $\mathsf{rf}(l) = s$. By synchronizes-with, $s \preceq_x l$, whence $s \in \cH_x(l)$. By definition of least upper bound, $s \preceq_x \bigvee \cH_x(l)$. Since $s \succeq_x \bigvee \cH_x(l)$, by hypothesis, and $\preceq_x$ is antisymmetric, we conclude $s = \bigvee \cH_x(l)$. \qed
\end{proof}

\begin{theorem}[SC-relaxed equivalence]
\label{theorem:SC-relaxed-equivalence}
For every program $\cX$ in $\bbP$, $\cX$ is SC-relaxed where, for all partial strings $x$ in $\cX$ and acquire events $l$ in $E_x$, $\mathsf{rf}(l) = \bigvee \cH_x(l)$, if and only if the synchronizes-with, write-coherence and from-read axioms hold for all $x$ in $\cX$ with respect to all release/acquire events in $E_x$ on the same memory location.
\end{theorem}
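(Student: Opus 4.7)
The plan is to assemble the theorem as a bookkeeping step on top of the equivalences already built up in this section, rather than do any new combinatorial work. The key observation is that the condition ``$\mathsf{rf}(l) = \bigvee \cH_x(l)$ for every acquire $l$'' is, by Definition~\ref{def:read-consistency}, precisely \emph{strong read consistency} restricted to release/acquire events. So the theorem is really the statement that SC-relaxed plus strong read consistency on release/acquire events is equivalent to the conjunction of synchronizes-with, write-coherence, and from-read on release/acquire events.

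For the direction ``axioms imply SC-relaxed with $\mathsf{rf}(l) = \bigvee \cH_x(l)$'', I would first apply Proposition~\ref{proposition:SC-relaxed-consistency} to the given axioms to conclude that $\cX$ is SC-relaxed. Write-coherence on releases then ensures, via Proposition~\ref{proposition:weak-read-consistency-existence}, that $\bigvee \cH_x(l)$ exists for every acquire $l$. Since $\cX$ is now known to be SC-relaxed, Proposition~\ref{proposition:read-consistency-equivalence} turns the from-read axiom into weak read consistency (on release/acquire events). Finally, combining weak read consistency with the synchronizes-with axiom, the backward implication of Proposition~\ref{proposition:strong-read-consistency-equivalence} yields strong read consistency, i.e. $\mathsf{rf}(l) = \bigvee \cH_x(l)$.

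For the converse direction, I would start by noting that being SC-relaxed immediately gives write-coherence on release events, because SC-relaxed programs totally order the releases on any single memory location by Definition~\ref{def:SC-relaxed-program}. The hypothesis $\mathsf{rf}(l) = \bigvee \cH_x(l)$ is strong read consistency, and the forward implication of Proposition~\ref{proposition:strong-read-consistency-equivalence} extracts from it both the synchronizes-with axiom and weak read consistency. Invoking Proposition~\ref{proposition:read-consistency-equivalence} once more, which applies because $\cX$ is SC-relaxed, converts weak read consistency into the from-read axiom, closing the implication.

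The only thing to be careful about is that each invocation of Propositions~\ref{proposition:read-consistency-equivalence} and~\ref{proposition:strong-read-consistency-equivalence} is made in the correct scope---release/acquire events on the same memory location, inside an SC-relaxed program whose write-coherence on releases has either been hypothesised or already derived. No new combinatorial argument is needed; the main obstacle is simply ordering the appeals to the prior propositions so that each of their preconditions is available at the moment of invocation.
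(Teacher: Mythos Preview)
Your proposal is correct and mirrors the paper's own proof almost exactly: both identify the hypothesis $\mathsf{rf}(l)=\bigvee\cH_x(l)$ as strong read consistency, derive write-coherence directly from the SC-relaxed definition, and then shuttle between the remaining axioms and read-consistency notions via Propositions~\ref{proposition:SC-relaxed-consistency},~\ref{proposition:read-consistency-equivalence}, and~\ref{proposition:strong-read-consistency-equivalence}. The only cosmetic difference is that you treat the two implications in the opposite order.
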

\begin{proof}
Assume $\cX$ is an SC-relaxed program according to Definition~\ref{def:SC-relaxed-program}. Let $x$ be a partial string in $\cX$ and $l$ be an acquire in the set of events $E_x$. By Proposition~\ref{proposition:weak-read-consistency-existence}, $\bigvee \cH_x(l)$ exists. Assume $\mathsf{rf}(l) = \bigvee \cH_x(l)$. Since $l$ is arbitrary, this is equivalent to assuming strong read consistency. Since release events are totally ordered in $\preceq_x$, by assumption, it remains to show that the synchronizes-with and from-read axioms hold. This follows from Proposition~\ref{proposition:strong-read-consistency-equivalence}~and~\ref{proposition:read-consistency-equivalence}, respectively.

Conversely, assume the three weak memory axioms hold on $x$ with respect to all release/acquire events in $E_x$ on the same memory location. By Proposition~\ref{proposition:SC-relaxed-consistency}, $\cX$ is an SC-relaxed program. Therefore, by Proposition~\ref{proposition:read-consistency-equivalence}~and~\ref{proposition:strong-read-consistency-equivalence}, $\mathsf{rf}(l) = \bigvee \cH_x(l)$, proving the equivalence. \qed
\end{proof}

While the state-of-the-art weak memory encoding is cubic in size~\cite{AKT2013}, the previous theorem has as immediate consequence that there exists an asymptotically smaller weak memory encoding with only a quadratic number of partial order constraints.

\begin{theorem}[Quadratic-size weak memory encoding]
\label{theorem:smaller-fr}
There exists a quantifier-free first-order logic formula that has a quadratic number of partial order constraints and is equisatisfiable to the cubic-size encoding given in~\cite{AKT2013}.
\end{theorem}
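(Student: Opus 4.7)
The plan is to derive the quadratic-size encoding directly from the strong-read-consistency characterization established in Theorem~\ref{theorem:SC-relaxed-equivalence}. First, I would pin down the structure of the cubic encoding of~\cite{AKT2013} restricted to a single memory location carrying $N$ memory accesses: it introduces partial order variables on the relevant events together with a symbolic read-from term $\mathsf{rf}(l)$ per load, and imposes three families of partial order constraints: $O(N^2)$ write-coherence clauses totally ordering the stores, $O(N)$ synchronizes-with clauses of the form $\mathsf{rf}(l)\preceq l$, and, crucially, $O(N^3)$ from-read clauses, one per triple $(l,s,s')$, of the shape $\mathsf{rf}(l)=s\wedge s\prec s'\Rightarrow l\preceq s'$. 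This third family is responsible for the cubic blow-up, and eliminating it is the target.

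Next, I would construct the new formula $\Phi$ by keeping the write-coherence and synchronizes-with clauses unchanged and replacing the cubic from-read family by the weak-read-consistency reformulation of Proposition~\ref{proposition:read-consistency-characterization}: for every load $l$ and every store $s'$ on the same memory location, add the single clause $s'\preceq l\Rightarrow s'\preceq \mathsf{rf}(l)$. Treating $\mathsf{rf}(l)$ as an uninterpreted function symbol into the stores on that location, this contributes only $O(N)$ clauses per load, hence $O(N^2)$ clauses per memory location. Summed with the $O(N^2)$ write-coherence clauses and the $O(N)$ synchronizes-with clauses, $\Phi$ is quantifier-free first-order over the theory of uninterpreted functions and carries a quadratic number of partial order constraints.

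To establish equisatisfiability with the cubic encoding, I would appeal to Theorem~\ref{theorem:SC-relaxed-equivalence}: under the choice $\mathsf{rf}(l)=\bigvee\cH_x(l)$, SC-relaxed consistency is equivalent to the conjunction of synchronizes-with, write-coherence and from-read, so the models of the cubic encoding coincide with the SC-relaxed partial strings equipped with their strong read-from. On the other side, $\Phi$ is satisfied precisely by the models of write-coherence together with strong read consistency, because Proposition~\ref{proposition:strong-read-consistency-equivalence} reduces strong read consistency to synchronizes-with plus weak read consistency, and Proposition~\ref{proposition:read-consistency-characterization} phrases weak read consistency exactly as the replacement clauses of $\Phi$. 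The two classes of models agreeing gives equisatisfiability.

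The main obstacle is technical rather than conceptual: one has to confirm that the least upper bound $\bigvee\cH_x(l)$ implicitly invoked is always defined, which is the content of Proposition~\ref{proposition:weak-read-consistency-existence} and relies on write-coherence being present in $\Phi$, and one has to handle the degenerate case $\cH_x(l)=\emptyset$ by either ruling it out via the minimal initializing stores suggested after Proposition~\ref{proposition:weak-read-consistency-existence}, or by a conventional value for $\mathsf{rf}(l)$. Once these details are settled, the clause count of $\Phi$ is immediately quadratic per memory location and the proof is complete.
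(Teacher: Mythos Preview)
Your construction and clause count are right, and the route via Proposition~\ref{proposition:read-consistency-characterization} and Proposition~\ref{proposition:strong-read-consistency-equivalence} is exactly how the paper's (very terse) proof should be unpacked: the paper too replaces the cubic from-read instantiation by a symbolic encoding of the least upper bound $\bigvee\cH_x(l)$, introducing one fresh variable per acquire, which amounts to your clauses $s'\preceq l\Rightarrow s'\preceq\mathsf{rf}(l)$.

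There is, however, a genuine gap in your equisatisfiability argument. You correctly identify the models of the cubic encoding as the SC-relaxed partial strings with $\mathsf{rf}(l)=\bigvee\cH_x(l)$ (this is Theorem~\ref{theorem:SC-relaxed-equivalence}), and the models of your $\Phi$ as those satisfying write-coherence together with strong read consistency. But these two classes do \emph{not} coincide: SC-relaxed is strictly stronger than write-coherence, because it additionally requires every acquire/release pair on the same location to be comparable. Concretely, take releases $s\prec s'$ on $a$ and an acquire $l$ on $a$ with $s\preceq l$, $\mathsf{rf}(l)=s$, and $l,s'$ incomparable. Then write-coherence, synchronizes-with, and all your replacement clauses hold (the clause for $s'$ is vacuous since $s'\not\preceq l$), so $\Phi$ is satisfied; yet from-read fails because $\mathsf{rf}(l)=s\prec s'$ would force $l\preceq s'$. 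Your appeal to Proposition~\ref{proposition:read-consistency-equivalence} does not rescue this, since that equivalence is stated only \emph{for SC-relaxed programs}, a hypothesis your $\Phi$ does not secure.

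The fix is cheap and keeps the encoding quadratic: add to $\Phi$ the $O(N^2)$ comparability clauses $l\preceq s'\vee s'\preceq l$ for every acquire/release pair on the same location, so that $\Phi$ directly encodes the left-hand side of Theorem~\ref{theorem:SC-relaxed-equivalence} (SC-relaxed plus $\mathsf{rf}(l)=\bigvee\cH_x(l)$). With that amendment your argument goes through verbatim and matches the paper's intended encoding.
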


\begin{proof}
Instead of instantiating the three universally quantified events in the from-read axiom, symbolically encode the least upper bound of weak read consistency. This can be accomplished with a new symbolic variable for every acquire event. It is easy to see that this reduces the cubic number of partial order constraints to a quadratic number. \qed
\end{proof}

In short, the asymptotic reduction in the number of partial order constraints is due to a new symbolic encoding for how values are being overwritten in memory: the current cubic-size formula~\cite{AKT2013} encodes the from-read axiom (Definition~\ref{def:memory-axioms}), whereas the proposed quadratic-size formula encodes a certain least upper bound (Definition~\ref{def:read-consistency}). We reemphasize that this formulation is in terms of release/acquire events rather than machine-specific accesses as in~\cite{AKT2013}. The construction of the quadratic-size encoding, therefore, is generally only applicable if we can translate the machine-specific reads and writes in a shared memory program to acquire and release events, respectively. This may require the program to be data race free, as illustrated in Example~\ref{example:memory-access-with-data-race}.

Furthermore, as mentioned in the introduction of this section, the primary application of Theorem~\ref{theorem:smaller-fr} is in the context of BMC. Recall that BMC assumes that all loops in the shared memory program under scrutiny have been unrolled (the same restriction as in~\cite{AKT2013}). This makes it possible to symbolically encode branch conditions, thereby alleviating the need to explicitly enumerate each finite partial string in an elementary program.

\section{Concluding remarks}
\vspace{-0.3em}
\label{section:concl}

This paper has studied a partial order model of computation that satisfies the axioms of a unifying algebraic concurrency semantics by Hoare et al. By further restricting the partial string semantics, we obtained a relaxed sequential consistency semantics which was shown to be equivalent to the conjunction of three weak memory axioms by Alglave et al. This allowed us to prove the existence of an equisatisfiable but asymptotically smaller weak memory encoding that has only a quadratic number of partial order constraints compared to the state-of-the-art cubic-size encoding. In upcoming work, we will experimentally compare both encodings in the context of bounded model checking using SMT solvers. As future theoretical work, it would be interesting to study the relationship between categorical models of partial string theory and event structures.

\paragraph{Acknowledgements.} We would like to thank Tony Hoare and Stephan van Staden for their valuable comments on an early draft of this paper, and we thank Jade Alglave, C\'{e}sar Rodr\'{i}guez, Michael Tautschnig, Peter Schrammel, Marcelo Sousa, Bj\"{o}rn Wachter and John Wickerson for invaluable discussions.

\bibliographystyle{splncs}
\bibliography{paper}

\begin{thebibliography}{10}

\bibitem{HMSW2011}
Hoare, C.A., M\"{o}ller, B., Struth, G., Wehrman, I.:
\newblock Concurrent {Kleene} algebra and its foundations.
\newblock J. Log. Algebr. Program. \textbf{80}(6) (2011)  266--296

\bibitem{HvS2014}
Hoare, T., van Staden, S.:
\newblock The laws of programming unify process calculi.
\newblock Sci. Comput. Program. \textbf{85} (2014)  102--114

\bibitem{HvS2012}
Hoare, T., van Staden, S.:
\newblock In praise of algebra.
\newblock Formal Aspects of Computing \textbf{24}(4-6) (July 2012)  423--431

\bibitem{HvSMSVZOH2014}
Hoare, T., van Staden, S., M\"{o}ller, B., Struth, G., Villard, J., Zhu, H.,
  W.~O'Hearn, P.:
\newblock Developments in {Concurrent Kleene Algebra}.
\newblock RAMiCS '14 (2014)  1--18

\bibitem{P1986}
Pratt, V.:
\newblock Modeling concurrency with partial orders.
\newblock Int. J. Parallel Program. \textbf{15}(1) (February 1986)  33--71

\bibitem{G1988}
Gischer, J.L.:
\newblock The equational theory of pomsets.
\newblock Theor. Comput. Sci. \textbf{61}(2-3) (November 1988)  199--224

\bibitem{P1966}
Petri, C.A.:
\newblock Communication with automata.
\newblock PhD thesis, Universit\"{a}t Hamburg (1966)

\bibitem{L1978}
Lamport, L.:
\newblock Time, clocks, and the ordering of events in a distributed system.
\newblock Commun. ACM \textbf{21}(7) (July 1978)  558--565

\bibitem{G1981}
Grabowski, J.:
\newblock On partial languages.
\newblock Fundam. Inform. \textbf{4}(2) (1981)  427--498

\bibitem{NPW1981}
Nielsen, M., Plotkin, G.D., Winskel, G.:
\newblock Petri nets, event structures and domains, part {I}.
\newblock Theor. Comput. Sci. \textbf{13}(1) (1981)  85 -- 108

\bibitem{BK1992}
Bloom, B., Kwiatkowska, M.Z.:
\newblock Trade-offs in true concurrency: Pomsets and {Mazurkiewicz} traces.
\newblock MFPS '91, Springer (1992)  350--375

\bibitem{FKL1993}
Feigenbaum, J., Kahn, J., Lund, C.:
\newblock Complexity results for {POMSET} languages.
\newblock SIAM J. Discret. Math. \textbf{6}(3) (1993)  432--442

\bibitem{LS2014}
Laurence, M.R., Struth, G.:
\newblock Completeness theorems for {Bi}-{Kleene} algebras and series-parallel
  rational pomset languages.
\newblock RAMiCS '14 (2014)  65--82

\bibitem{SSONM2010}
Sewell, P., Sarkar, S., Owens, S., Nardelli, F.Z., Myreen, M.O.:
\newblock {x86-TSO}: A rigorous and usable programmer's model for x86
  multiprocessors.
\newblock Commun. ACM \textbf{53}(7) (July 2010)  89--97

\bibitem{SVZJS2011}
\v{S}ev\v{c}\'{i}k, J., Vafeiadis, V., Zappa~Nardelli, F., Jagannathan, S.,
  Sewell, P.:
\newblock Relaxed-memory concurrency and verified compilation.
\newblock SIGPLAN Not. \textbf{46}(1) (January 2011)  43--54

\bibitem{BOSSW2011}
Batty, M., Owens, S., Sarkar, S., Sewell, P., Weber, T.:
\newblock Mathematizing {C++} concurrency.
\newblock SIGPLAN Not. \textbf{46}(1) (January 2011)  55--66

\bibitem{AMSS2012}
Alglave, J., Maranget, L., Sarkar, S., Sewell, P.:
\newblock Fences in weak memory models (extended version).
\newblock FMSD \textbf{40}(2) (2012)  170--205

\bibitem{AKT2013}
Alglave, J., Kroening, D., Tautschnig, M.:
\newblock Partial orders for efficient bounded model checking of concurrent
  software.
\newblock CAV'13, Springer (2013)  141--157

\bibitem{E2002}
\'{E}sik, Z.:
\newblock Axiomatizing the subsumption and subword preorders on finite and
  infinite partial words.
\newblock Theor. Comput. Sci. \textbf{273}(1-2) (February 2002)  225--248

\bibitem{AMSS2011}
Alglave, J., Maranget, L., Sarkar, S., Sewell, P.:
\newblock Litmus: Running tests against hardware.
\newblock TACAS'11/ETAPS'11, Springer (2011)  41--44

\bibitem{HA2014}
{Horn}, A., {Alglave}, J.:
\newblock Concurrent {Kleene} algebra of partial strings.
\newblock ArXiv e-prints \textbf{abs/1407.0385} (July 2014)

\bibitem{BW1990}
de~Bakker, J.W., Warmerdam, J.H.A.:
\newblock Metric pomset semantics for a concurrent language with recursion.
\newblock In: Proceedings of the LITP Spring School on Theoretical Computer
  Science on Semantics of Systems of Concurrent Processes. (1990)  21--49

\bibitem{L1979}
Lamport, L.:
\newblock How to make a multiprocessor computer that correctly executes
  multiprocess programs.
\newblock IEEE Trans. Comput. \textbf{28}(9) (September 1979)  690--691

\bibitem{GLLGGH1990}
Gharachorloo, K., Lenoski, D., Laudon, J., Gibbons, P., Gupta, A., Hennessy,
  J.:
\newblock Memory consistency and event ordering in scalable shared-memory
  multiprocessors.
\newblock SIGARCH Comput. Archit. News \textbf{18}(2SI) (May 1990)  15--26

\bibitem{CPP14}
{ISO}:
\newblock International Standard ISO/IEC 14882:2014(E) Programming Language
  C++.
\newblock International Organization for Standardization (2014) Ratified, to
  appear soon.

\end{thebibliography}

\end{document}